\titleformat{\section}[block]{\large\normalfont\filcenter}{\thesection.}{.5em}{}
\titleformat{\subsection}[hang]{\bfseries}{\thesubsection.}{.5em}{}
\newtheorem{proposition}{Proposition}
\newtheorem{definition}{Definition}
\newtheorem{lemma}{Lemma}
\newtheorem{assumption}{Assumption}
\DeclareMathOperator*{\argmax}{arg\,max}
\DeclareMathOperator*{\supp}{supp}
\DeclareMathOperator*{\mpc}{MPC}
\definecolor{shade}{gray}{.7}
\newcommand\blfootnote[1]{%
  \begingroup
  \renewcommand\thefootnote{}\footnote{#1}%
  \addtocounter{footnote}{-1}%
  \endgroup
}
\begin{document}
\title{How to Segment a Search Market:\\Information Design and Directed Search}
\author{Teddy Mekonnen\thanks{\textbf{Contact}: \href{mailto: mekonnen@brown.edu}{ mekonnen@brown.edu}. Department of Economics, Brown University. }}
\date{\today}
\maketitle
\begin{abstract}
This paper examines when the public provision of information in search markets improves welfare. I consider a two-sided frictional search market in which buyers match with vertically differentiated sellers. The market is segmented into submarkets based on seller types. Such segmentation serves as a public signal that buyers use to direct their search. Given a segmentation, I characterize both the socially efficient and the equilibrium allocation of buyers across submarkets, and identify a Hosios-type condition under which the equilibrium allocation is efficient. I then analyze the design of surplus-maximizing segmentations, showing that the nature of search externalities determines when the constrained-efficient segmentation fully separates seller types or pools them into at most a binary partition.
\end{abstract}

 \noindent {JEL Classifications: C78, D62, D83}\smallskip
 
 \noindent {Keywords: directed search, market segmentation, information design}
\blfootnote{}
\blfootnote{I would like to express my deepest appreciation to Eddie Dekel, Federico Echenique, Bruno Strulovici, and Asher Wolinsky for their guidance and mentorship. This work has also greatly benefited from conversations with Alessandro Bonatti, Odilon C\^amara, Hector Chade, Yeon-Koo Che, Pedro Dal B\'o, Laura Doval, Jack Fanning, Kyungmin Kim, Stelios Michalopoulos, Bobak Pakzad-Hurson, Jo\~ao Ramos, Jesse Shapiro, Nikhil Vellodi, and Rajiv Vohra. Finally, I thank the editor and anonymous referees for comments and suggestions that greatly improved the paper. This paper supersedes a previous draft entitled “Random versus Directed Search for Scarce Resources.” }

\newpage

\allowdisplaybreaks

\section{Introduction}\label{sec:intro}
Agents in search markets frequently rely on publicly available information to guide their search for desirable goods and services. For example, firms often recruit job candidates from reputable universities;  shoppers on an online marketplace, such as \emph{Amazon}, filter for highly rated products; and  health insurance enrollees  consult review aggregators like \emph{Vitals} or \emph{Healthgrades} to identify top-rated physicians. In each of these settings, agents use information about the quality of goods or services to segment the market into smaller submarkets before  searching for a match in one of these submarkets. I refer to such information, which aids agents in directing their search prior to finding a match, as \emph{ex-ante information}.

In single-agent search environments, the social value of information is unambiguous: more precise ex-ante information improves efficiency by facilitating matches between agents and higher-quality goods and services. In real-world search markets, however, many agents compete for limited resources. Universities graduate a finite number of students each year, online marketplaces have limited inventories of products, and insurance networks feature a restricted pool of physicians. Moreover, these resources are typically vertically differentiated---firms prefer more productive employees, shoppers value higher-quality products, and patients seek more competent physicians---with agents vying for higher-quality matches. In such environments, agents' search decisions create externalities that influence the outcomes of others. Furthermore, these search externalities are shaped by the publicly available ex-ante information that guides the agents' strategic decisions. Thus, a central question arises: does the public provision of more precise ex-ante information enhance welfare in search markets with externalities?

This paper addresses this question by considering a two-sided market populated by a finite measure of buyers and sellers, where each buyer (he) has a unit demand, and  each seller (she) has a unit supply. The model has three defining features: First, sellers are vertically differentiated, with a seller's private type determining the quality of the good she supplies. Second, buyers observe a public signal about seller types, which they use to guide their search. Third, search is frictional, with the likelihood of meeting a trading partner determined by the market tightness---the ratio of buyers to sellers. The meeting function, which governs this frictional search, encapsulates two key externalities: a thick market externality, where a tighter market facilitates more meetings between buyers and sellers, and a congestion externality, where a tighter market intensifies competition among buyers, thereby reducing the probability that an individual buyer meets a seller. These opposing forces jointly determine equilibrium and welfare outcomes in search markets.

Public ex-ante information is modeled as a segmentation of the search market into a collection of submarkets. Sellers are allocated to these submarkets depending on their types, possibly with some randomization. I consider any flexible segmentation of the market \`a la \cite{ber15}. Based on this segmentation,  buyers form beliefs about the composition of seller types in each submarket. Each buyer then selects a submarket and engages in frictional search for a seller within it. If the buyer meets a seller, he observes her type, and the two bargain over the terms of trade.

The first set of results characterizes both the socially efficient and  equilibrium allocations of buyers across submarkets for any fixed market segmentation. From a welfare perspective, tighter submarkets are advantageous because they facilitate more meetings, thereby increasing the overall volume of trade. However, with a finite measure of buyers, raising tightness in one submarket necessarily reduces it elsewhere. The unique socially efficient allocation therefore balances the thick market externality created by tighter submarkets against the expected surplus generated within each submarket.

In contrast, the unique equilibrium allocation arises from how buyers trade off  the congestion in each submarket against the share of surplus they expect to capture in that submarket conditional on trade. In equilibrium, buyers follow a cutoff strategy based on a reservation value. They avoid submarkets where the expected payoff falls below this threshold. Among the remaining submarkets, those offering higher payoffs conditional on trade attract more buyers until their level of congestion precisely offsets their potential gains from trade.

Interestingly, a higher reservation value leads to lower buyer welfare. When buyers adopt a higher reservation value, they enter fewer submarkets, which increases the congestion within the submarkets they do join. This reduces the probability that any one buyer meets a seller, thereby lowering their average payoff.\footnote{This contrasts with single-agent search environments, where a higher reservation value does not generate congestion externalities and typically corresponds to a higher expected payoff.} In equilibrium, however, the reservation value itself must be consistent with buyers' average payoff, giving rise to a rational expectations condition.

This decentralized decision-making often leads to an inefficient allocation for two reasons: First, the seller types that generate the greatest total surplus may differ from those that provide buyers with the largest share of surplus, leading to a misalignment between the preferences of buyers and those of a social planner. Second, even when the preferences of the planner and the buyers coincide, buyers may over-concentrate in submarkets containing high-type sellers, creating excessive congestion. This misallocation highlights the dual role of information: while it helps buyers direct their search, it could also amplify the negative effects of congestion externalities.

Despite these two channels for inefficiencies, I identify a necessary and sufficient condition under which the socially efficient allocation of buyers is obtained as an equilibrium outcome. The condition, similar to \cite{hos90},  relates the efficiency of decentralized search to the elasticity of the meeting function and the bargaining power of buyers. When satisfied, this condition ensures that buyers internalize the congestion externalities they impose, aligning their private incentives with the planner’s objectives and yielding an efficient equilibrium outcome. However, this condition is satisfied only in knife-edge cases, implying that  equilibrium outcomes are inefficient in most practical settings.

The second set of results examines the design of surplus-maximizing market segmentations under two settings: one where buyers' allocations across submarkets is socially efficient (the first-best segmentation) and another where buyers' allocations reflect equilibrium behavior (the constrained-efficient segmentation). The first-best segmentation reflects the planner’s ability to directly control both the segmentation and the allocation of buyers, while the constrained-efficient segmentation accounts for buyers' strategic responses to the public provision of ex-ante information.

When buyers are efficiently allocated across submarkets, I show that a perfect market segmentation, where each seller type is assigned to a unique submarket, constitutes a first-best market segmentation. Intuitively, because more meetings lead to more trade, and because trade with a higher seller type generates a larger surplus, the socially efficient allocation of meetings must increase in seller types. Concentrating buyers into a single submarket for each seller type then allows the planner to achieve a desired volume of meetings while minimizing the number of buyers required. This segmentation enables the planner to tailor the allocation of meetings to seller types while fully leveraging the thick market externality.

Unlike the first-best setting, the constrained-efficient case requires the planner to account for buyers’ strategic responses to the ex-ante information revealed by the market segmentation. This interdependence between segmentation design and equilibrium behavior complicates the characterization of the optimal segmentation. Nevertheless, under mild assumptions on the bargaining power of buyers, I show that the problem can be decomposed into two distinct steps: an information design problem, where the planner determines what information to reveal about seller types, and a fixed-point problem, which ensures consistency between the segmentation and the buyers' equilibrium responses. This decomposition reveals that constrained-efficient market segmentations take on a monotone partition structure \citep{kle21, ari23}.

Further refinement to the structure of constrained-efficient segmentations arises under additional assumptions regarding the nature of congestion externalities. If congestion externalities worsen gradually as market tightness increases, perfect segmentation constitutes a constrained-efficient segmentation. In this case, a fully directed search market obtains, where buyers have complete ex-ante information about seller types. In contrast, if congestion externalities worsen rapidly, the constrained-efficient segmentation takes the form of a binary partition. Sellers are grouped into two broad categories: seller types above a threshold are pooled into a ``high" submarket, while types below the threshold are pooled into a ``low" submarket. In this case, search is partially-directed with all buyers joining only the high submarket and engaging in random search within it. 

These findings underscore the critical role of thick market and congestion externalities in determining the welfare consequences of more precise ex-ante information. When congestion externalities play a more limited role in search decisions, more informative public signals improve welfare, making perfect segmentation  constrained-efficient. Conversely, when congestion externalities dominate, the constrained-efficient segmentation withholds substantial information from the buyers, demonstrating the potential negative welfare implications from the public provision of more precise ex-ante information in search markets with externalities.

\subsection{Related Literature}

The role of information in search markets can be broadly divided into two categories. The first is ex-ante information, where agents observe signals prior to meeting a potential match and use this information to guide their search strategies. The second is interim information, where agents observe signals after meeting a potential match and use this information to determine the terms of trade.

This paper focuses on the former, and specifically, considers the role of ex-ante information in a market for search goods---goods whose quality is revealed upon inspection.\footnote{In this paper, a buyer observes a seller's type after the two meet, which reveals the quality of the seller's product. } In this context, \cite{anderson2006advertising}, \cite{choi2019optimal}, and \cite{lyu2023information} analyze a monopolist's optimal provision of information when buyers use it to decide between visiting the monopolist or taking their outside option. Similarly, \cite{choi2018consumer} explores the role of ex-ante information in oligopoly markets for horizontally differentiated search goods. However, these studies focus on settings with either a single buyer or sellers without capacity constraints, thereby abstracting away from search externalities. This paper addresses this gap by focusing on how ex-ante information interacts with  thick market and congestion externalities in markets with multiple buyers and capacity-constrained sellers.\footnote{Of course, other types of externalities could be considered. For example. \cite{vel18} considers the role of ex-ante information in the entry and exit decisions of sellers in a search market, and shows that suppressing information can incentivize the entry of new sellers and delay the exit of incumbents.}

This work also connects to the extensive literature on competitive search markets, where sellers post prices that buyers use to direct their search.\footnote{While the labor and matching literature often uses the  terms \emph{directed search} and \emph{competitive search} interchangeably, I follow \cite{wri21} in distinguishing between the two. As per their definition, ``Directed search means agents see some, although perhaps not all, characteristics of other agents, and based on that choose where to look for counterparties," while ``competitive search equilibrium means that agents on one side of the market post the terms of trade, while agents on the other side observe what is posted and direct their search accordingly." Thus, competitive search is a special case of directed search.} Examples include search for exchange goods \citep{but77, pet91}, labor markets \citep{mon91,moe97, morr02}, and markets with two-sided heterogeneity \citep{shi02, shim05, eec10}. A central insight from this literature is that equilibrium outcomes are typically efficient because posted prices internalize search externalities.

In contrast, this paper considers directed search driven solely by information rather than posted prices. Terms of trade are instead determined via bargaining once a buyer and seller meet,\footnote{One such environment is the labor market for high-skilled workers: firms often direct their search by recruiting candidates from top universities, but salaries are negotiated after a successful completion of the interview process. Similarly, the posted-price framework is unsuitable in environments where sellers lack commitment power, or where prices are fixed exogenously.}  as in \cite{dia82}, \cite{mor82}, and \cite{piss85}. Consequently, prices cannot fully internalize the search externalities  unless a Hosios condition \citep{hos90} is satisfied. Hence, this paper demonstrates that beyond price-based competitive search, directed search  can give rise to inefficient equilibrium outcomes. Furthermore, it identifies conditions under which information design can mitigate these inefficiencies, offering insights into the optimal provision of ex-ante information in search markets.

Few papers explore directed search beyond price-based competitive search. The most relevant work in this regard is \cite{men07}, who examines a private-values labor market where informed firms send cheap-talk messages that uninformed workers use to direct their search. Once a trading partner is met, the terms of trade are determined via bargaining. \cite{men07} shows that while informative cheap-talk equilibria exist, all equilibrium outcomes are inefficient.\footnote{When the trading mechanism is a first-price auction rather than bargaining, \cite{kim2015efficient} demonstrates the existence of a fully revealing and efficient cheap-talk equilibrium in a private-values setting, whereas \cite{kim2012endogenous} shows that no such equilibrium exists in a setting with vertical differentiation.} These inefficiencies stem from two interdependent externalities: search externalities on the workers' side of the market, and signaling externalities on the firms' side, which arise because the informational content of a firm's cheap-talk message depends on the messaging strategies of other firms.

In this paper, however, the market segmentation (and thus, its informational content) is taken as exogenously given by both buyers and sellers, reflecting the increasingly data-driven segmentation of markets based on certification or reviews rather than cheap talk. As a result, signaling externalities are absent from my model, leading to a clean analysis of how information interacts with search externalities. This, in turn, allows me to identify conditions under which efficient and equilibrium outcomes coincide. 

Of course, directed search need not be based solely on posted prices or information; rather, it may be based on both. \cite{eeckhout2010sorting} show that when buyers and sellers meet bilaterally, it is efficient to perfectly segment the market by type, and that this outcome can be decentralized by having buyers direct their search based on both the market segmentation and sellers' posted prices. Similarly, \cite{cai2017search} demonstrate that bilateral meetings are both sufficient and necessary for perfect market segmentations to emerge as socially efficient. In this respect, my finding that perfect segmentation is efficient is not surprising. However, I show that this outcome can be implemented as an equilibrium if and only if the Hosios condition identified in the paper is satisfied.

Finally, this paper loosely relates to the body of work on the role of interim information, especially in markets for \emph{experience goods}---goods whose quality is revealed only upon consumption. In the context of large search markets, \cite{lau12} and \cite{les16} demonstrate the negative welfare implications of interim information provision. A growing body of work also considers the design of interim information for experience goods in single-agent settings: \cite{rom19}, \cite{dogan2022consumer}, \cite{hu2022industry}, \cite{mekonnen2023persuaded}, and \cite{sato2023persuasion} all consider centralized provision of interim information, while \cite{board2018competitive}, \cite{au2023attraction}, \cite{mekonnen2024competition}, and \cite{he2023competitive} consider competitive provision of information. Beyond search markets, the welfare implications and the design of interim information has also been analyzed in various other contexts, including price-discriminating monopoly \citep{ber15}, differentiated Bertrand competition \citep{elliott2021market}, bilateral matching markets \cite{condorelli2023buyer}, and many-to-one matching markets \cite{li2023mismatch}.

The remainder of the paper is structured as follows:  I describe the model in \Cref{model}, and derive the first-best outcome in \Cref{sec:first-best}. I then characterize the equilibrium outcome for a fixed market segmentation in \Cref{sec:equilibrium}. Finally, \Cref{sec:equil-efficiency} discusses the efficiency properties of equilibrium outcomes and tackles the design of a constrained-efficient market segmentation.  All proofs are in the \hyperref[appendix]{Appendix}.

\section{Model}\label{model}

A two-sided market is populated by a unit mass of homogeneous buyers on one side and a mass $k>0$ of heterogeneous sellers on the other. Each buyer (he) has a unit demand for a good, and each seller (she) has a unit supply. A seller's type, denoted by $\theta$, captures the quality of her good. The type space $\Theta$ is a compact interval normalized to $[0,1]$. Let $\mathcal{B}(\Theta)$ denote the Borel $\sigma$-algebra of $\Theta$ and $\Delta(\Theta)$ denote the space of probability measures over $(\Theta, \mathcal{B}(\Theta))$. Types are distributed according to an absolutely continuous probability measure $\mu\in\Delta(\Theta)$. Each seller knows her own type, but each buyer ex-ante knows only the type distribution; he observes a seller's type only upon meeting her.

The market is segmented into submarkets, with the segmentation taken as exogenous and common knowledge among the buyers and sellers. Each seller is allocated to a submarket based on her type, possibly with some randomization.  Each buyer, on the other hand, can freely choose which submarkets to join. 
  
The timing is as follows: Given the market segmentation, each buyer picks a submarket to join. Within each submarket, buyers and sellers meet each other according to a \emph{bilateral meeting function}, which will be formalized shortly. Upon meeting a seller, a buyer observes the seller's type, and the two bargain over the terms of trade. If they reach a mutual agreement to trade at some price $p\in \mathbb{R}$, the buyer and seller obtain payoffs $\theta-p$ and $p$, respectively. Buyer-seller pairs that fail to trade as well as buyers and sellers who do not find a trading partner earn a payoff of zero.

\subsection{Market segmentation}

A submarket is simply a market populated by a subset of seller types.  Accordingly, each submarket can be indexed by the distribution of seller types it contains; that is, a submarket with seller-type distribution $\nu \in \Delta(\Theta)$ is referred to as submarket $\nu$.

A market segmentation is an allocation of sellers to submarkets based on their types. Formally, let $\mathcal{B}(\Delta(\Theta))$ denote the Borel $\sigma$-algebra over $\Delta(\Theta)$, and let $\Delta(\Delta(\Theta))$ be the space of probability measures over $(\Delta(\Theta), \mathcal{B}(\Delta(\Theta)))$. A market segmentation is defined as a stochastic kernel $\sigma: \mathcal{B}(\Delta(\Theta)) \times \Theta \to [0,1]$ such that,
\begin{enumerate}[$(a)$]
\item For each $\theta \in \Theta$, the mapping $B \mapsto \sigma(B, \theta)$ is a probability measure in $\Delta(\Delta(\Theta))$;
\item For each $B \in \mathcal{B}(\Delta(\Theta))$, the mapping $\theta \mapsto \sigma(B, \theta)$ is $\mathcal{B}(\Theta)$-measurable.
\end{enumerate}
Under segmentation $\sigma$, a type-$\theta$ seller is allocated to submarket $\nu$  with probability $\sigma(d\nu,\theta)$. Consequently, out of the total mass $k$ of sellers, the fraction allocated to submarket $\nu$ is given by
\[
\sigma_\mu(d\nu)\coloneqq\int_\Theta\sigma(d\nu, \theta)\mu(d\theta),
\]
and the support $\supp(\sigma_\mu)$ represents the submarkets present in the market.

Of course, the market segmentation must ensure that each submarket $\nu$ actually contains seller types distributed according to $\nu$. In particular, $\sigma$ must satisfy the following consistency condition: for all $A\in \mathcal{B}(\Theta)$ and $B\in \mathcal{B}(\Delta(\Theta))$,
\[
\label{eq:bayes}
\tag{1}
\int_A \sigma(B, \theta)\mu(d\theta)=\int_{B} \nu(A)\sigma_\mu(d\nu).
\]
This identity has a natural interpretation: consider sellers   types in some measurable subset $A\subseteq \Theta$. The fraction of these types allocated to submarkets in $B$, given by the left-hand side, must equal the fraction of the same types contained in submarkets in $B$, given by the right-hand side.

The consistency condition in \eqref{eq:bayes} guarantees that each submarket $\nu$ indeed has seller-type distribution $\nu$, and that buyers can infer this using Bayes’ rule. Thus, submarkets serve a dual role for buyers: each one is a platform for meeting a seller, and it also serves as a signal of the seller types therein. This is especially transparent when evaluating the consistency condition \eqref{eq:bayes} at $B=\Delta(\Theta)$, which yields
 \[
\int_{\Delta(\Theta)}\nu\sigma_\mu(d\nu)=\mu.
\]
This coincides with the definition of a market segmentation in \cite{ber15}, and the Bayes-plausibility constraint in \cite{kam11}.

\subsection{Bilateral Meeting functions}\label{mf}
Meetings are assumed to be bilateral, so a buyer meets at most one seller and vice versa. Following the literature on search and matching \citep{pet01, rog05},  the likelihood that a buyer and seller meet each other within a submarket depends only on the \emph{submarket tightness}, which is the ratio of buyers to sellers in that submarket. 

Formally, a submarket populated by a measure $b$ of buyers and a measure $s$ of sellers has a submarket tightness of $t\coloneqq b/s\in\mathbb{R}_+\cup\{\infty\}$. Each seller meets a buyer with probability $m(t)$ and meets no one with probability $1-m(t)$, where the mapping $t\mapsto m(t)$ denotes the \emph{meeting function}. As a measure $s\cdot m(t)$ of sellers meet buyers and meetings are bilateral, there must also be a measure $s\cdot m(t)$ of buyers that meet sellers. Thus, each buyer meets a seller with probability $m(t)/t$. The following assumptions on the meeting function are maintained for the remainder of the paper.
\medskip

\begin{assumption}
\label{ass:meeting-function} The meeting function satisfies the following:
\begin{enumerate}[$(\alph*)$]
\item $m(t)\leq \min\{1, t\}$ for all $t\geq 0$, and
\item  $m$ is twice differentiable, strictly increasing, and strictly concave.
\end{enumerate}
\end{assumption}
\medskip

The first assumption implies that both $m(t)$ and  $m(t)/t$ are well-defined probabilities. To motivate the second assumption, notice that as the submarket tightness increases, the market contains even more buyers relative to sellers, which has two effects: On the one hand, each seller in the submarket has more opportunities to meet a buyer, giving rise to a \emph{thick market externality} on the sellers' side. This effect is captured by the monotonicity of the meeting function.\footnote{Some earlier work in the matching and search literature uses the term ``thick market externality” to refer to increasing returns to scale in the meeting function, which can give rise to  multiple equilibrium outcomes in search markets \citep{diamond1982aggregate}. In contrast, and consistent with much of the recent literature, I assume constant returns to scale. See \citet{pet01} for a more in-depth survey.} On the other hand, each buyer in the submarket faces more competition to meet a seller, giving rise to a \emph{congestion externality} on the buyers' side. This effect is captured by the strict concavity of the meeting function, which guarantees that the mapping $t\mapsto m(t)/t$ is strictly decreasing. 

Let $\alpha\coloneqq m(\infty)$ be a seller's highest probability of meeting a buyer, and let  $\beta\coloneqq  m(0)/0$ be a buyer's highest probability of meeting a seller, with $\alpha, \beta\in (0,1]$. In other words, an agent's probability of meeting a trading partner is highest in an extremely unbalanced submarket where the agent is on the scarce side. However, even in such extreme cases, the agent is not guaranteed a meeting if $\alpha<1$ (for a seller) or $\beta<1$ (for a buyer).

Examples of meeting functions that satisfy \autoref{ass:meeting-function} include the \emph{CES meeting function}
\[
m(t)=\frac{\alpha \beta t}{\left(\alpha^\rho+(\beta t)^\rho\right)^{1/\rho}}
\]
with $\rho>0$, and the \emph{Urn-ball meeting function}
\[
m(t)=\beta t\left(1-\exp\left(\frac{-\alpha}{ \beta t}\right)\right).
\]
While \autoref{ass:meeting-function} is common in the search and matching literature, it is admittedly restrictive. However, by  standard continuity arguments, the results of this paper extend to meeting functions that do not satisfy \autoref{ass:meeting-function}, provided such functions can be approximated by meeting functions that do satisfy it. For example, consider the meeting function $m(t)=\min\{\alpha, \beta t\}$, which generalizes the \emph{frictionless meeting function} given by $m(t)=\min\{1, t\}$. This meeting function does not satisfy  Point $(b)$ of  \autoref{ass:meeting-function}, as it is only weakly increasing, weakly concave, and non-differentiable at $t=\alpha/\beta$. Nevertheless, the CES meeting function approximates the frictionless one as $\rho\to\infty$, and thus, the insights of this paper extend to this class of generalized frictionless meeting functions.

\subsection{Trade decisions}\label{td}
I assume that a buyer and seller negotiate the terms of trade only after they meet, and that their negotiations result in all ex-post efficient trades being realized. Specifically, suppose a buyer has met a type-$\theta$ seller. If they trade, they generate a total surplus of $\theta\in [0,1]$; If they do not, the surplus is zero. Thus, they mutually agree to trade at some price $p(\theta)$, which  requires $\theta\geq p(\theta)$ for the buyer to agree and $p(\theta)\geq 0$ for the seller to agree. In other words, the price $p(\theta)$ is some convex combination of the surplus $\theta$ and the zero outside option. 

Furthermore, I assume that buyers capture a non-negligible share of the surplus from trading with some seller types; otherwise, the buyers would have no incentives to direct their search to any specific submarket, making the problem at hand trivial. Formally, I assume:
\medskip

\begin{assumption}
\label{ass:2}
There exists a $\mathcal{B}(\Theta)$-measurable function $\lambda:\Theta\to [0,1]$ such that 
\begin{enumerate}[$(\alph*)$]
\item \textbf{Ex-post efficiency:} $p(\theta)=\big(1-\lambda(\theta)\big)\theta$ for all $\theta\in\Theta$, and 
\item \textbf{Non-triviality:} $\mu(\{\theta\in\Theta:\lambda(\theta)>0\})>0$.
\end{enumerate}
\end{assumption}
\medskip

\noindent I refer to $\lambda$ as the \emph{surplus-splitting function}, with  $\theta-p(\theta)=\lambda(\theta)\theta$ representing the share of surplus captured by a buyer when he trades with a type-$\theta$ seller.

One foundation for $p(\cdot)$ is as the outcome of a Rubinstein bargaining problem: upon meeting, a buyer and a seller enter a ``bargaining phase" in which the pair make alternating offers on how to divide the surplus. When buyers have a discount factor of $\varrho_b\in (0,1)$ and type-$\theta$ sellers have a discount factor of $\varrho_\theta\in (0,1)$, the unique solution to the bargaining phase yields a surplus-splitting function of $\lambda(\theta)=(1-\varrho_\theta)/(1-\varrho_b\varrho_\theta)$ when buyers make the first proposal, or $\lambda(\theta)=(1-\varrho_b)/(1-\varrho_b\varrho_\theta)$ when sellers make the first proposal. 

The price $p(\theta)$ can also be rationalized as the solution to a generalized  Nash-bargaining problem
\[
\max_{p\in\mathbb{R}}\big(\theta-p\big)^{\lambda(\theta)}p^{1-\lambda(\theta)},
\]
with $\lambda(\theta)$ capturing the buyer's bargaining power when he meets a type-$\theta$ seller. 

Finally, $p(\theta)$ can also reflect the expected price in a setting where a buyer who, upon meeting a type-$\theta$ seller, makes a take-it-or-leave-it offer with probability $\lambda(\theta)$, while the seller does the same with the complementary probability. If $\lambda(\theta)=1$ for all $\theta\in\Theta$, the model becomes analogous to one with non-transferable utility because a buyer can trade with any seller he meets at a price of zero.

\section{First-best market segmentation}\label{sec:first-best}
As a benchmark, consider a planner that seeks to maximize total surplus by choosing the market segmentation as well as the allocation of buyers across submarkets. A market segmentation $\sigma$ determines the allocation of sellers, with $k\sigma_\mu(B)$ representing the mass of sellers allocated to submarkets in $B\in\mathcal{B}(\Delta(\Theta))$. Let $Q\in\Delta(\Delta(\Theta))$ be the buyer-allocation policy, with $Q(B)$ representing the mass of buyers allocated to submarkets in $B\in\mathcal{B}(\Delta(\Theta))$.

Buyers are allocated only to submarkets that contain sellers, so  $Q$ must be absolutely continuous with respect to $\sigma_\mu$. In particular, there exists a measurable function $\tau:\Delta(\Theta)\to \mathbb{R}_+$  such that $Q(d\nu)=k\tau(\nu)\sigma_\mu(d\nu)$  for $\sigma_\mu$-almost all $\nu\in\Delta(\Theta)$. In other words, the mapping $\nu\mapsto \tau(\nu)$ represents the ratio of buyers to sellers within each submarket and, thus, I refer to it as the \emph{submarket tightness function}.  

The market segmentation and the submarket tightness function together fully determine the buyer-allocation policy. Hence, given a market segmentation $\sigma$, choosing a buyer-allocation policy $Q$ is equivalent to choosing a submarket tightness function $\tau$ that satisfies the following feasibility constraint:
\[
\label{eq:feasible}
\tag{2}
k\int_{\Delta(\Theta)}\tau (\nu)\sigma_\mu(d\nu)=1.
\]

Given market segmentation $\sigma$ and submarket tightness function $\tau$, there is a measure $k\sigma_\mu(d\nu)m(\tau(\nu))$ of meetings in submarket $\nu\in\supp(\sigma_\mu)$, with each meeting generating an expected surplus of $\mathbb{E}_\nu[\theta]\coloneqq\int_\Theta \theta \nu(d\theta)$. Therefore, the first-best outcome is the solution to the following surplus-maximization problem:
\begin{align*}
    \label{eq:planner}
    \tag{FB}
    &\max_{\sigma, \tau}\hspace*{.25em}k\int_{\Delta(\Theta)}m(\tau(\nu))\mathbb{E}_\nu[\theta]\sigma_\mu(d\nu) \hspace{1em }\text{s.t. } \hspace{1em }\eqref{eq:feasible}.
\end{align*}

To analyze this surplus-maximization problem, I decompose it into two distinct optimization steps: First, for each market segmentation, I characterize the surplus-maximizing submarket tightness function. Second, I determine the optimal market segmentation. To that end, given market segmentation $\sigma$, the surplus-maximizing submarket tightness function is a solution to 
\begin{align*}
    \label{eq:planner-sigma}
    \tag{FB-$\sigma$}
    &\max_{\tau}\hspace*{.25em}k\int_{\Delta(\Theta)}m(\tau(\nu))\mathbb{E}_\nu[\theta]\sigma_\mu(d\nu) \hspace{1em }\text{s.t. } \hspace{1em }\eqref{eq:feasible}.
\end{align*}

To characterize the solution to \eqref{eq:planner-sigma}, let  the inverse of the mapping $t\mapsto 1/m'(t)$ be denoted by $f$ so that $f(y)=t$ if $1/m'(t)=y$.  Since $m'(t)$ is continuous and strictly decreasing in $t$,  $f(y)$ is continuous and strictly increasing in $y$. Furthermore, under \autoref{ass:meeting-function}, $\lim_{t\to 0} m'(t)=\beta$ and $\lim_{t\to \infty} m'(t)=0$,\footnote{Since $m$ is strictly concave and $m(0)=0$ ( \autoref{ass:meeting-function}),  $m'(t)<m(t)/t$ for all $t>0$. Taking limits on both sides as $t\to 0$, we have $m'(0)\leq \beta$. Similarly, given $t>0$, we have $m'(t)>(m(t')-m(t))/(t'-t)$ for all $t'>t$. Taking double limits on both side, first as $t\to 0$ and then as $t'\to 0$, yields that $m'(0)\geq \beta$.}  which implies that $f:[1/\beta, \infty)\to\mathbb{R}_+$ with  $f(1/\beta)=0$ and $\lim_{y\to \infty}f(y)=\infty$. 
\medskip

\begin{proposition}\label{thm:planner}
Fix a market segmentation $\sigma$. Let $\eta_\sigma$ be the unique value of $\eta \in (0, \beta)$ that solves
\[
k\int_{\Delta(\Theta)}f\left(\frac{\mathbb{E}_\nu[\theta]}{\eta}\right)\cdot\mathbbm{1}_{\left[\beta\,\mathbb{E}_\nu[\theta]>\eta\right]}\sigma_\mu(d\nu)=1.
\]
Then the submarket tightness function $\tau^{FB}_\sigma$ given by 
\[
\tau^{FB}_\sigma(\nu)=\left\{\begin{array}{ccc}0 & \mbox{if} & \beta\,\mathbb{E}_\nu[\theta]\leq \eta_\sigma\\[10pt]
\displaystyle f\left(\frac{\mathbb{E}_\nu[\theta]}{\eta_\sigma}\right) & \mbox{if} & \beta\,\mathbb{E}_\nu[\theta]>\eta_\sigma 
  \end{array}\right.
  \]
  solves \eqref{eq:planner-sigma}. Furthermore, any other submarket tightness function $\tau$ that solves \eqref{eq:planner-sigma} satisfies $\tau=\tau^{FB}$ for $\sigma_\mu$-almost everywhere.
\end{proposition}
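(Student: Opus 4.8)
The plan is to treat \eqref{eq:planner-sigma} as a concave program and solve it by Lagrangian relaxation of the single feasibility constraint \eqref{eq:feasible}, thereby reducing the infinite-dimensional problem to a family of one-dimensional pointwise maximizations indexed by $\nu$. For a multiplier $\eta>0$, I would work with the relaxed objective $k\int_{\Delta(\Theta)}\big[m(\tau(\nu))\mathbb{E}_\nu[\theta]-\eta\tau(\nu)\big]\sigma_\mu(d\nu)$, obtained by subtracting $\eta$ times the constraint. Since $m$ is concave by \autoref{ass:meeting-function}, the integrand $\tau\mapsto m(\tau)\mathbb{E}_\nu[\theta]-\eta\tau$ is concave for each $\nu$, so the relaxed objective is maximized by maximizing inside the integral pointwise. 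The task then reduces to choosing $\eta$ so that the pointwise maximizer also satisfies \eqref{eq:feasible}; a direct concavity argument will deliver optimality, and strict concavity will deliver uniqueness.

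For each fixed $\nu$, the map $\tau\mapsto m(\tau)\mathbb{E}_\nu[\theta]-\eta\tau$ has right-derivative $\beta\mathbb{E}_\nu[\theta]-\eta$ at $\tau=0$ (using $m'(0)=\beta$) and is strictly concave whenever $\mathbb{E}_\nu[\theta]>0$. Hence when $\beta\mathbb{E}_\nu[\theta]\le\eta$ the map is nonincreasing and its maximizer is $\tau=0$, while when $\beta\mathbb{E}_\nu[\theta]>\eta$ the first-order condition $m'(\tau)\mathbb{E}_\nu[\theta]=\eta$ has a unique interior root, which by the definition of $f$ is exactly $\tau=f(\mathbb{E}_\nu[\theta]/\eta)$. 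This reproduces the stated form of $\tau^{FB}_\sigma$ with $\eta$ in place of $\eta_\sigma$. To pin down the multiplier I would analyze $G(\eta)\coloneqq k\int f(\mathbb{E}_\nu[\theta]/\eta)\mathbbm{1}[\beta\mathbb{E}_\nu[\theta]>\eta]\sigma_\mu(d\nu)$, the total buyer mass induced by the pointwise rule. Since $f$ is strictly increasing and the indicator set expands as $\eta$ falls, $G$ is strictly decreasing; it is continuous because the integrand vanishes at the threshold ($f(1/\beta)=0$), so the jumps of the indicator contribute nothing even if the law of $\mathbb{E}_\nu[\theta]$ under $\sigma_\mu$ has atoms. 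Monotone convergence gives $G(\eta)\to\infty$ as $\eta\to 0$ (here I use that $\mu$ is absolutely continuous, so $\mathbb{E}_\mu[\theta]=\int\mathbb{E}_\nu[\theta]\sigma_\mu(d\nu)>0$ and hence $\sigma_\mu(\{\mathbb{E}_\nu[\theta]>0\})>0$), while $G(\eta)\le kf(1/\eta)\to 0$ as $\eta\to\beta$. A unique $\eta_\sigma\in(0,\beta)$ with $G(\eta_\sigma)=1$ therefore exists, and it makes $\tau^{FB}_\sigma$ feasible.

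It then remains to verify global optimality and essential uniqueness. For any feasible $\tau$, pointwise optimality of $\tau^{FB}_\sigma$ for the relaxed integrand gives $m(\tau^{FB}_\sigma(\nu))\mathbb{E}_\nu[\theta]-\eta_\sigma\tau^{FB}_\sigma(\nu)\ge m(\tau(\nu))\mathbb{E}_\nu[\theta]-\eta_\sigma\tau(\nu)$ for every $\nu$; integrating against $k\sigma_\mu$ and cancelling the two identical $\eta_\sigma$-terms (both $\tau$ and $\tau^{FB}_\sigma$ satisfy \eqref{eq:feasible}, so each contributes exactly $\eta_\sigma$) yields the required inequality for the original objective. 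For uniqueness, if $\tau$ is another solution then the pointwise inequality must hold with equality for $\sigma_\mu$-almost every $\nu$; since the integrand is strictly concave in $\tau$ when $\mathbb{E}_\nu[\theta]>0$ and strictly decreasing (hence uniquely maximized at $0$) when $\mathbb{E}_\nu[\theta]=0$, the pointwise maximizer is unique, forcing $\tau=\tau^{FB}_\sigma$ a.e. I expect the main obstacle to be justifying the reduction to pointwise maximization—arguing that the Lagrangian bound is attained and tight rather than merely invoking finite-dimensional KKT—together with the continuity of $G$ across the indicator threshold; both are resolved by the direct concavity/cancellation argument above and by the observation that $f(1/\beta)=0$, so I would foreground these two points.
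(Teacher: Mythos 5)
Your proposal is correct and follows essentially the same route as the paper: Lagrangian relaxation of the feasibility constraint, pointwise maximization of the concave integrand to obtain the cutoff/first-order-condition form of $\tau^{FB}_\sigma$, and then tuning the multiplier via the monotone, continuous map $\eta\mapsto G(\eta)$ (the paper's $\Psi$) so that the constraint binds. The only difference is cosmetic: where the paper cites Luenberger's theorems for the equivalence with the Lagrangian problem, you verify sufficiency and essential uniqueness directly by integrating the pointwise inequality and cancelling the multiplier terms, which is a valid and self-contained substitute.
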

\medskip

From a social welfare perspective, thicker submarkets are desirable because they facilitate more meetings, and ultimately, more trade. However, with a finite measure of buyers, increasing the thickness of one submarket necessarily reduces the thickness of others. This trade-off produces two key implications for the first-best allocation of buyers across submarkets.

First, some submarkets may be \emph{inactive}---no buyers are allocated to them. Specifically, given a market segmentation $\sigma$, submarkets generating an expected surplus lower than $\eta_\sigma/\beta$ are inactive, while those generating a strictly larger expected surplus are \emph{active}. In other words, the planner focuses the limited number of buyers on submarkets with a ``high-enough" expected surplus.

Second, the marginal value of an additional buyer in any one of the active submarkets must be equal. As the meeting function is concave, this condition requires that an active submarket's tightness to be strictly increasing in its expected surplus. Furthermore, the marginal value of an additional buyer in any active submarket must exceed that in any inactive submarket. Otherwise, reallocating some buyers from an active to an inactive submarket would increase total surplus. Under this interpretation, any submarket with an expected surplus equal to $\eta_\sigma/\beta$ is a marginal submarket that leaves the planner indifferent between keeping it inactive or allocating buyers to it.

Let us now address the task of characterizing the first-best market segmentation: what market segmentation, along with the corresponding surplus-maximizing submarket tightness function, should the planner choose? In other words, the first-best market segmentation is the solution to the following:
\begin{align*}
    &\max_{\sigma}\hspace*{.25em}k\int_{\Delta(\Theta)}m(\tau^{FB}_\sigma(\nu))\mathbb{E}_\nu[\theta]\sigma_\mu(d\nu).
\end{align*}

To characterize the first-best market segmentation, I begin by defining the notion of a \emph{perfect market segmentation}, which allocates each seller type to a distinct submarket. Formally, for each $\theta \in \Theta$, let $\delta_\theta \in \Delta(\Theta)$ denote the Dirac measure at $\theta$, so that for any $A \in \mathcal{B}(\Theta)$, $\delta_\theta(A) = \mathbbm{1}_{[\theta\in A]}$. The perfect market segmentation is then defined by the stochastic kernel $\sigma^{PS}$ such that, for all $B \in \mathcal{B}(\Delta(\Theta))$ and $\theta \in \Theta$, $\sigma^{PS}(B,\theta)=\mathbbm{1}_{[\delta_\theta\in B]}$.

\begin{proposition}\label{thm:first}
The perfect segmentation $\sigma^{PS}$ is a first-best market segmentation.
\end{proposition}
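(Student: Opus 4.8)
The plan is to produce a single upper bound on attainable surplus that holds uniformly across \emph{all} segmentations and is met with equality by $\sigma^{PS}$. The engine is a weak-duality (Lagrangian) argument combined with Jensen's inequality applied through the Bayes-plausibility identity $\int_{\Delta(\Theta)}\nu\,\sigma_\mu(d\nu)=\mu$. Write $V(\sigma)$ for the value of \eqref{eq:planner-sigma}, i.e.\ the surplus under $\sigma$ and its optimal tightness $\tau^{FB}_\sigma$ from \Cref{thm:planner}, and define the pointwise dual payoff
\[
H(x,\eta)\coloneqq\sup_{\psi\ge 0}\big[x\,m(\psi)-\eta\psi\big], \qquad x\in[0,1],\ \eta>0 .
\]

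First I would establish weak duality. For any $\eta>0$ and any feasible $\tau$ (so $k\int\tau\,d\sigma_\mu=1$), adding the vanishing term $-\eta\big(k\int\tau\,d\sigma_\mu-1\big)$ to the objective and maximizing the integrand pointwise gives
\[
k\int_{\Delta(\Theta)}m(\tau(\nu))\,\mathbb{E}_\nu[\theta]\,\sigma_\mu(d\nu)\;\le\;\eta+k\int_{\Delta(\Theta)}H\big(\mathbb{E}_\nu[\theta],\eta\big)\,\sigma_\mu(d\nu).
\]
Applied to $\tau^{FB}_\sigma$ this bounds $V(\sigma)$, for every $\eta>0$. Second, note that $x\mapsto H(x,\eta)$ is convex, being a supremum of functions affine in $x$. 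Hence Jensen's inequality yields $H(\mathbb{E}_\nu[\theta],\eta)\le\int_\Theta H(\theta,\eta)\,\nu(d\theta)$ for each $\nu$; integrating against $\sigma_\mu$ and invoking $\int\nu\,\sigma_\mu(d\nu)=\mu$ collapses the right-hand side, giving the segmentation-independent bound
\[
V(\sigma)\;\le\;\eta+k\int_\Theta H(\theta,\eta)\,\mu(d\theta)\;=:\;U(\eta).
\]

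Finally I would verify that $\sigma^{PS}$ attains $U$ at the multiplier $\eta^\ast\coloneqq\eta_{\sigma^{PS}}$ of \Cref{thm:planner}. Under $\sigma^{PS}$ every active submarket is $\delta_\theta$ with $\mathbb{E}_{\delta_\theta}[\theta]=\theta$, and the maximizer defining $H(\theta,\eta^\ast)$ is precisely the first-best tightness $\tau^{FB}_{\sigma^{PS}}(\delta_\theta)=f(\theta/\eta^\ast)$ (and $0$ at the corner $\beta\theta\le\eta^\ast$), since the first-order condition $m'(\psi)=\eta^\ast/\theta$ inverts to $f(\theta/\eta^\ast)$. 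Substituting this maximizer into $U(\eta^\ast)$, using that feasibility binds so $k\int\tau^{FB}_{\sigma^{PS}}\,d\mu=1$, the $\eta^\ast$ terms cancel and $U(\eta^\ast)=V(\sigma^{PS})$. Combining with the previous bound gives $V(\sigma)\le U(\eta^\ast)=V(\sigma^{PS})$ for every $\sigma$, which is the claim.

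The main obstacle, and the only place demanding care, is the interplay between the corner solutions (inactive submarkets, where the maximizer in $H$ is $\psi=0$) and the domain restriction $\eta\in(0,\beta)$ inherited from \Cref{thm:planner}, together with confirming attainment, measurability, and integrability of $H(\cdot,\eta)$ so that the pointwise maximization and the Jensen step are legitimate. The weak-duality inequality and the convexity/Jensen comparison are otherwise routine, and crucially I only need weak duality for the bound plus exact attainment for $\sigma^{PS}$, so no general strong-duality theorem is required.
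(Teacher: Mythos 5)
Your argument is correct, and it takes a genuinely different route from the paper. The paper's proof is constructive: starting from an arbitrary pair $(\sigma,\tau^{FB}_\sigma)$, it builds a feasible tightness function on $\sigma^{PS}$ by averaging, for each type $\theta$, the tightness that type experiences across the submarkets $\sigma$ spreads it over, i.e.\ $\tilde\tau(\delta_\theta)=\int\tau^{FB}_\sigma(\nu')\sigma(d\nu',\theta)$, and then applies Jensen's inequality directly to the concave meeting function $m$ to show that each seller type's expected surplus weakly increases; this exhibits the improvement seller-by-seller and is what the paper's intuition paragraph leans on. You instead produce a dual certificate: weak duality gives $V(\sigma)\le\eta+k\int H(\mathbb{E}_\nu[\theta],\eta)\,\sigma_\mu(d\nu)$ for every $\eta>0$, convexity of $x\mapsto H(x,\eta)$ (a supremum of functions affine in $x$) plus the Bayes-plausibility identity $\int\nu\,\sigma_\mu(d\nu)=\mu$ collapses the bound to the segmentation-independent quantity $U(\eta)$, and attainment at $\eta^\ast=\eta_{\sigma^{PS}}$ follows from the first-order and binding-feasibility characterization in \autoref{thm:planner}. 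The two proofs use concavity of $m$ in different places --- the paper directly via Jensen on $m$, you only through the characterization of $\tau^{FB}_{\sigma^{PS}}$ and the multiplier --- and your version has the side benefit of delivering an explicit upper bound $U(\eta^\ast)$ on first-best surplus valid for all segmentations, at the cost of being less transparent about \emph{why} pooling types is wasteful. The technical caveats you flag are all benign: the supremum defining $H(\theta,\eta)$ is attained (the objective is concave in $\psi$ with derivative tending to $-\eta<0$), $H(\cdot,\eta)$ is continuous hence measurable and bounded by $H(1,\eta)<\infty$, and the maximizer coincides with $\tau^{FB}_{\sigma^{PS}}(\delta_\theta)$ including at the corner $\beta\theta\le\eta^\ast$, so the cancellation $U(\eta^\ast)=V(\sigma^{PS})$ goes through exactly as you describe.
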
\medskip

Intuitively, by choosing a market segmentation $\sigma$ and a corresponding submarket tightness function $\tau_\sigma^{FB}$, the planner is effectively allocating
\[
k\mu(d\theta)\int_{\Delta(\Theta)}m(\tau_\sigma^{FB}(\nu))\sigma(d\nu, \theta)
\]
meetings to type-$\theta$ sellers. The concavity in the meeting function then implies that,  for each seller type, concentrating buyers into a single thick submarket  minimizes the mass of buyers needed to attain a desired volume of meetings. Consequently, perfect segmentation emerges as a first-best market segmentation.

Nevertheless, perfect segmentation is not the only first-best market segmentation. For instance, under perfect segmentation, we know from \autoref{thm:planner} that the submarket for any seller type below $\eta_{\sigma^{PS}}/\beta$ is inactive, while the submarket for any seller type strictly above $\eta_{\sigma^{PS}}/\beta$ is active. Thus, a lower censorship market segmentation, which pools all seller types $\theta\leq\eta_{\sigma^{PS}}/\beta$ into a single submarket and maintains a perfect segmentation for all types $\theta> \eta_{\sigma^{PS}}/\beta$, would also constitute a first-best market segmentation. 

\section{Search Equilibrium}\label{sec:equilibrium}
Let us now consider a search market in which each buyer strategically chooses which submarkets to enter. What is the  equilibrium  outcome of such a decentralized search environment? 

To address this, consider an arbitrary market segmentation $\sigma$, and suppose buyers anticipate a submarket tightness function $\tau:\Delta(\Theta)\to\mathbb{R}_+$. Given $(\sigma, \tau)$, a buyer who joins submarket $\nu\in\supp(\sigma_\mu)$ expects to meet a seller with probability $m(\tau(\nu))/\tau(\nu)$. If the buyer meets a type-$\theta$ seller within the submarket (with $\theta\in\supp(\nu)$), he further expects to earn a payoff $\lambda(\theta)\theta$ from trade. Therefore, the buyer's expected payoff from joining submarket $\nu$ is given by 
\[
U(\nu;\tau)\coloneqq  \frac{m
\big(\tau(\nu)\big)}{\tau(\nu)}\underbrace{\int_\Theta \lambda(\theta)\theta\nu(d\theta)}_{\coloneqq\mathbb{E}_\nu[\lambda(\theta)\theta]}.
\]

\begin{definition}\label{def:equilibrium}  
A search equilibrium of a market segmentation $\sigma$ is given by a submarket tightness function $\tau:\Delta(\Theta)\to\mathbb{R}_+$ and a reservation value $u\in \mathbb{R}$ such that
\begin{enumerate}[$(a)$]
    \item $\tau$ satisfies the feasibility constraint \eqref{eq:feasible}, and
    \item $(\tau, u)$ jointly satisfy 
\[
\label{eq:inequality}
\tag{3}
U(\nu;\tau)\leq u
\]
for $\sigma_\mu$-almost all $\nu$ with equality if $\tau(\nu)>0$.
\end{enumerate}
\end{definition} 

In essence, buyers form expectations about submarket tightness $\tau$ and their reservation value $u$, which represents the average payoff they expect to earn from participating in the search market. In equilibrium, these expectations must be consistent with a feasible allocation of buyers across submarkets, as captured by \eqref{eq:feasible}, as well as with an optimal buyers' search strategy, as captured by \eqref{eq:inequality}. Intuitively, the latter condition states that submarkets that yield a payoff lower than $u$ remain inactive, while those offering payoffs at least as large as $u$ are active. Furthermore, among active submarkets, buyers must be indifferent between any two; otherwise, they would all gravitate toward the submarket offering the higher expected payoff. 

Finally, a search equilibrium $(\tau, u)$ must satisfy a rational expectations condition: the anticipated reservation value $u$ should be consistent with the buyers' actual ex-ante payoff. To see this, observe that a pair $(\tau, u)$ that satisfies  \autoref{def:equilibrium} implies that $m(\tau(\nu))\mathbb{E}_\nu[\lambda(\theta)\theta]=u\,\tau(\nu)$ for $\sigma_\mu$-almost all $\nu$ from \eqref{eq:inequality}. Integrating over all submarkets then yields 
\begin{align*}
\int_{\Delta(\Theta)}m(\tau(\nu))\mathbb{E}_\nu[\lambda(\theta)\theta]\sigma_\mu(d\nu)&=u\int_{\Delta(\Theta)}\tau(\nu)\sigma_\mu(d\nu)\\[6pt]
    &=\frac{u}{k},
\end{align*}
where the last equality follows from \eqref{eq:feasible}. On the other hand, buyers' ex-ante payoff from an allocation policy $Q\in \Delta(\Delta(\Theta))$ with $dQ/d\sigma_\mu=k\cdot\tau$ is given by\footnote{Recall that given segmentation $\sigma$, submarket $\nu\in \supp(\sigma_\mu)$ has a measure $k \cdot\sigma_\mu(d\nu)$ sellers. Hence, if the submarket has tightness $\tau(d\nu)$, then it must have a measure $Q(d\nu)$ of buyers, where $Q(d\nu)=k\cdot\tau(d\nu)\cdot \sigma_\mu(d\nu)$.} 
\begin{align*}
    \int_{\Delta(\Theta)}U(\nu;\tau)Q(d\nu)&= k\int_{\Delta(\Theta)}m(\tau(\nu))\mathbb{E}_\nu[\lambda(\theta)\theta]\sigma_\mu(d\nu).
\end{align*}
Putting these two expressions together then yields the desired rational expectations condition: 
\begin{align*}
\label{eq:equil}
\tag{4}
u=k\int_{\Delta(\Theta)}m(\tau(\nu))\mathbb{E}_\nu[\lambda(\theta)\theta] \sigma_\mu(d\nu),
\end{align*}
that is, the buyers' reservation value is equal to the share of the ex-ante surplus they capture in equilibrium. 

To characterize the equilibrium, let the inverse of the mapping $t\mapsto t/m(t)$ be denoted by $g$ so that $g(y)=t$ if $t/m(t)=y$.  Since $m(t)/t$ is continuous and strictly decreasing in $t$, $g(y)$ is continuous and strictly increasing in $y$. Furthermore, $m(0)/0=\beta$ and $\lim_{t\to \infty} m(t)/t=0$, which implies that  $g:[1/\beta, \infty)\to\mathbb{R}_+$ with  $g(1/\beta)=0$ and $\lim_{y\to \infty}g(y)=\infty$. 
\medskip

\begin{proposition}\label{thm:equil}
Fix a market segmentation $\sigma$. There exists a search equilibrium $(\tau^*_\sigma, u^*_\sigma)$ such that $0< u^*_\sigma<\beta\sup_{\theta\in\Theta}\lambda(\theta)\theta$, and
\[
\tau^*_\sigma(\nu)=\left\{\begin{array}{ccc}0 & \mbox{if} & \beta\,\mathbb{E}_\nu[\lambda(\theta)\theta]\leq u^*_\sigma\\[4pt]
\displaystyle g\left(\frac{\mathbb{E}_\nu[\lambda(\theta)\theta]}{u^*_\sigma}\right) & \mbox{if} & \beta\,\mathbb{E}_\nu[\lambda(\theta)\theta]>u^*_\sigma 
  \end{array}\right..
  \]
  Furthermore, the equilibrium is essentially unique: any other search equilibrium $(\tau, u)$ satisfies $\tau=\tau^*_\sigma$ for $\sigma_\mu$-almost everywhere, and $u=u^*_\sigma$.
\end{proposition}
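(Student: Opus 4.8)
The plan is to collapse the infinite-dimensional equilibrium problem into a single scalar equation for the reservation value $u$, and then solve that equation by a monotonicity-plus-continuity argument. The key observation is that the two parts of \autoref{def:equilibrium} interact with a submarket $\nu$ only through the scalar statistic $\mathbb{E}_\nu[\lambda(\theta)\theta]$, so once $u$ is fixed, condition $(b)$ determines $\tau(\nu)$ submarket-by-submarket.

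First I would show that, for any candidate value $u$, condition $(b)$ is equivalent to $\tau$ having the stated cutoff form. On an active submarket ($\tau(\nu)>0$) the indifference condition $U(\nu;\tau)=u$ reads $m(\tau(\nu))/\tau(\nu)\cdot\mathbb{E}_\nu[\lambda(\theta)\theta]=u$, i.e. $\tau(\nu)/m(\tau(\nu))=\mathbb{E}_\nu[\lambda(\theta)\theta]/u$; inverting via $g$ gives $\tau(\nu)=g(\mathbb{E}_\nu[\lambda(\theta)\theta]/u)$, which is admissible exactly when $\mathbb{E}_\nu[\lambda(\theta)\theta]/u\ge 1/\beta$, and strictly so (since $g(1/\beta)=0$ is incompatible with $\tau(\nu)>0$), i.e. $\beta\mathbb{E}_\nu[\lambda(\theta)\theta]>u$. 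On an inactive submarket ($\tau(\nu)=0$) the relevant deviation payoff is the limiting value $\lim_{t\to0}m(t)/t\cdot\mathbb{E}_\nu[\lambda(\theta)\theta]=\beta\mathbb{E}_\nu[\lambda(\theta)\theta]$, and condition $(b)$ requires $\beta\mathbb{E}_\nu[\lambda(\theta)\theta]\le u$. Hence for $\sigma_\mu$-almost every $\nu$ the tightness must equal $\tau^*_\sigma(\nu)$ as a function of $u$; measurability of $\nu\mapsto\mathbb{E}_\nu[\lambda(\theta)\theta]$ guarantees this defines a legitimate tightness function.

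Second, I would substitute this form into the feasibility constraint \eqref{eq:feasible} to obtain the scalar equation $\Phi(u)=1$, where
\[
\Phi(u):=k\int_{\Delta(\Theta)}g\Big(\tfrac{\mathbb{E}_\nu[\lambda(\theta)\theta]}{u}\Big)\,\mathbbm{1}_{\left[\beta\mathbb{E}_\nu[\lambda(\theta)\theta]>u\right]}\,\sigma_\mu(d\nu).
\]
Here I extend $g$ by $0$ below $1/\beta$, so that the integrand is continuous in $u$ for each $\nu$ (the cutoff indicator causes no jump, because $g(1/\beta)=0$). I would then establish that on $(0,\beta\sup_{\theta\in\Theta}\lambda(\theta)\theta)$ the map $\Phi$ is finite (the integrand is bounded, since $\mathbb{E}_\nu[\lambda(\theta)\theta]\le\sup_{\theta\in\Theta}\lambda(\theta)\theta$ and $g$ is continuous, while $\sigma_\mu$ is a probability measure), continuous (dominated convergence on compact subintervals of the interior), and strictly decreasing wherever it is positive (for $u_1<u_2$ the argument $\mathbb{E}_\nu[\lambda(\theta)\theta]/u$ is larger at $u_1$ and the active set is larger, so the integrand is pointwise weakly larger at $u_1$ and strictly larger on the active set for $u_2$). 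Finally I would compute the boundary limits: as $u\downarrow0$, monotone convergence together with the non-triviality \autoref{ass:2}$(b)$—which gives $\sigma_\mu(\{\nu:\mathbb{E}_\nu[\lambda(\theta)\theta]>0\})>0$—forces $\Phi(u)\to\infty$; as $u\uparrow\beta\sup_{\theta\in\Theta}\lambda(\theta)\theta$, both the active set and the tightness on it vanish, so $\Phi(u)\to0$. Continuity, strict monotonicity, and these limits yield a unique $u^*_\sigma\in(0,\beta\sup_{\theta\in\Theta}\lambda(\theta)\theta)$ with $\Phi(u^*_\sigma)=1$. The pair $(\tau^*_\sigma,u^*_\sigma)$ then satisfies $(a)$ by construction and $(b)$ by the first step, giving existence; essential uniqueness follows because every equilibrium $(\tau,u)$ must have $\tau$ of the cutoff form in $u$ by the first step, so feasibility forces $\Phi(u)=1$, hence $u=u^*_\sigma$ and $\tau=\tau^*_\sigma$ for $\sigma_\mu$-almost every $\nu$.

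I expect the main obstacle to be the boundary analysis of $\Phi$, especially the $u\downarrow0$ limit, where the integrand blows up and one must invoke non-triviality to secure a positive mass of submarkets with strictly positive expected buyer surplus, rather than have the integral stay bounded. The competing interior concern—keeping $\Phi$ finite and continuous while correctly treating the active/inactive threshold—is resolved cleanly by the single observation that $g(1/\beta)=0$, which makes the cutoff indicator harmless; and the strict-monotonicity step, though routine, quietly relies on there being a positive measure of active submarkets, which is exactly what $\Phi(u)>0$ encodes.
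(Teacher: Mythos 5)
Your proof is correct and reaches the same cutoff characterization, but it pins down $u^*_\sigma$ through a different scalar equation than the paper does. Both arguments first show that condition $(b)$ of \autoref{def:equilibrium} forces $\tau(\nu)=g\big(\mathbb{E}_\nu[\lambda(\theta)\theta]/u\big)\mathbbm{1}_{[\beta\,\mathbb{E}_\nu[\lambda(\theta)\theta]>u]}$ for $\sigma_\mu$-almost all $\nu$ (your first step is exactly the paper's \autoref{lemma:active markets} plus the inversion via $g$). The paper then invokes the derived rational-expectations identity \eqref{eq:equil} and characterizes $u^*_\sigma$ as the unique fixed point of $u\mapsto k\int m(\phi(\nu,u))\mathbb{E}_\nu[\lambda(\theta)\theta]\sigma_\mu(d\nu)$, whereas you substitute the cutoff form into the feasibility constraint \eqref{eq:feasible} and solve the market-clearing equation $k\int \phi(\nu,u)\sigma_\mu(d\nu)=1$. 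The two equations are equivalent for $u>0$, since on active submarkets $m(\phi(\nu,u))\mathbb{E}_\nu[\lambda(\theta)\theta]=u\,\phi(\nu,u)$, so the paper's map is just $u$ times yours; your route verifies \autoref{def:equilibrium} directly as stated (and mirrors the paper's own proof of \autoref{thm:planner}, where the multiplier solves $\Psi(\eta)=1$), while the paper's fixed-point formulation buys uniqueness more cheaply, because $u\mapsto u-\Phi(u)$ is strictly increasing even when $\Phi$ is only weakly decreasing—with your equation you genuinely need the strict monotonicity of $\Phi$ where it is positive, which you correctly supply. Two small points to tighten: $(i)$ in the essential-uniqueness step you invert via $g(\cdot/u)$, which presupposes $u>0$, so you should first note that any equilibrium has $u>0$ (feasibility forces $\tau<\infty$ $\sigma_\mu$-a.e., hence $U(\nu;\tau)>0$ on the positive-measure set of submarkets with $\mathbb{E}_\nu[\lambda(\theta)\theta]>0$, which exists by \autoref{ass:2} and the consistency condition, and condition $(b)$ then yields $u>0$); $(ii)$ as $u\uparrow\beta\sup_{\theta}\lambda(\theta)\theta$ the active set need not vanish, but the bound $\phi(\nu,u)\leq g\big(\sup_\theta\lambda(\theta)\theta/u\big)\to g(1/\beta)=0$ still gives $\Phi(u)\to 0$, so your conclusion stands.
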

\medskip

To build intuition, consider a fixed reservation value $u$. When buyers anticipate a payoff of $u$, they optimally avoid entering any submarket $\nu$ satisfying $\beta\cdot \mathbb{E}_\nu[\lambda(\theta)\theta]\leq u$, even if these submarkets offer the highest probability of meeting a seller. This is because the expected payoff from trade in such submarkets is too low to justify participation. Among the remaining submarkets, those offering a higher payoff conditional on trade attract more buyers, leading to greater congestion. As a result, congestion in each active submarket precisely offsets its potential gains from trade. Consequently, the optimal submarket tightness for a given reservation value $u$ is determined by the mapping 
\[
\nu\mapsto g\left(\frac{\mathbb{E}_\nu[\lambda(\theta)\theta]}{u}\right)\mathbbm{1}_{[\beta\, \mathbb{E}_\nu[\lambda(\theta)\theta]>u]}.
\]
Moreover, a buyer's average payoff across all active submarkets is given by
\[
\int_{\Delta(\Theta)}m\left(g\left(\frac{\mathbb{E}_\nu[\lambda(\theta)\theta]}{u}\right)\right)\mathbbm{1}_{[\beta\, \mathbb{E}_\nu[\lambda(\theta)\theta]>u]}\mathbb{E}_\nu[\lambda(\theta)\theta]\sigma_\mu(d\nu),
\]
which is strictly decreasing in $u$. In other words, when all buyers anticipate a higher reservation value, fewer submarkets remain active, increasing congestion within those submarkets. This, in turn, lowers the probability that any given buyer meets a seller, thereby reducing their average payoff. Of course, in equilibrium, the reservation value $u^*_\sigma$ is not arbitrary; rather, it is determined by the rational expectations condition, captured in \eqref{eq:equil}, which requires it to equal the buyer's average payoff across all active submarkets.

A comparison of the socially efficient allocation of buyers across submarkets (\autoref{thm:planner}) to the equilibrium allocation  (\autoref{thm:equil}) highlights two main sources of inefficiency. First, a social planner's preference over seller types is strictly monotonic, as a higher type generates a larger surplus. In contrast, a buyer's valuation of seller types depends on the share of surplus they capture, which may be non-monotonic in seller types depending on the surplus-splitting function. For example, if $\lambda(\theta) = 1-\theta$, buyers prefer to meet intermediate-type sellers over low- and high-type sellers. This divergence in preferences over seller types can lead to a misalignment in how the planner and buyers evaluate submarkets. Consequently, the set of active submarkets in a search equilibrium may differ from that of the socially efficient allocation.

Second, even when the planner's and buyers' preferences over seller types align, the allocation of buyers across these submarkets may differ. In equilibrium, buyers allocate themselves to equalize their expected payoffs across active submarkets. In contrast, the planner allocates buyers to equalize the marginal value of an additional meeting across active submarkets. The equilibrium allocation is driven by the level of congestion in a submarket, $m(t)/t$, while the planner's allocation is driven by the thick market externality, $m'(t)$. From \autoref{ass:meeting-function}, $m(t)/t\geq m'(t)$ for all $t\geq 0$, implying that, even when a submarket is active under both the socially efficient and equilibrium allocations, too many buyers join the submarket in equilibrium, resulting in inefficiently high levels of congestion. 

\section{Efficiency of Search Equilibrium}\label{sec:equil-efficiency}

\subsection{Hosios condition}\label{sec:hosios}
Despite the potential inefficiencies in the equilibrium allocation of buyers, this section identifies necessary and sufficient conditions under which the first-best outcome, as established in \autoref{thm:planner} and \autoref{thm:first}, can be decentralized. To state the result, let 
 \[
 \varepsilon(t)\coloneqq \frac{m'(t)}{m(t)}\, t
 \]
denote the elasticity of the meeting function for $t\geq 0$.

\begin{proposition}\label{thm:hosios}
Suppose the surplus-splitting function $\lambda:\Theta\to [0,1]$ is continuous. The pair $(\sigma^{PS}, \tau^*_{\sigma^{PS}})$ solves \eqref{eq:planner} if and only if for all $\theta\leq\eta_{\sigma^{PS}}/\beta$, 
    \[
    \lambda(\theta)\theta\leq \lambda\left(\frac{\eta_{\sigma^{PS}}}{\beta}\right) \frac{\eta_{\sigma^{PS}}}{\beta},
    \]
    and for all $\theta>\eta_{\sigma^{PS}}/\beta$ 
    \[
\lambda(\theta)=\lambda\left(\frac{\eta_{\sigma^{PS}}}{\beta}\right)\varepsilon\left(f\left(\frac{\theta}{\eta_{\sigma^{PS}}}\right)\right).
\]
\end{proposition}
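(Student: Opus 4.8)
The plan is to reduce the claim to a direct verification that the first-best tightness function $\tau^{FB}_{\sigma^{PS}}$ can be sustained as the equilibrium tightness under the perfect segmentation. Since \autoref{thm:first} guarantees that $\sigma^{PS}$ is always a first-best segmentation and \autoref{thm:planner} delivers essential uniqueness of the first-best tightness, the pair $(\sigma^{PS}, \tau^*_{\sigma^{PS}})$ solves \eqref{eq:planner} if and only if the $\sigma^{PS}$-component is already optimal (it is) and $\tau^*_{\sigma^{PS}}$ attains the maximum of the inner problem, i.e. $\tau^*_{\sigma^{PS}} = \tau^{FB}_{\sigma^{PS}}$ for $\sigma^{PS}_\mu$-almost every submarket. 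Invoking the essential uniqueness of equilibrium in \autoref{thm:equil}, this holds if and only if there exists a reservation value $u$ for which $(\tau^{FB}_{\sigma^{PS}}, u)$ satisfies \autoref{def:equilibrium}. The proof then amounts to characterizing exactly when $\tau^{FB}_{\sigma^{PS}}$ is part of a search equilibrium.

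To carry this out I first specialize to $\sigma^{PS}$, under which submarkets are the Dirac measures $\delta_\theta$ with $\mathbb{E}_{\delta_\theta}[\theta]=\theta$ and $\mathbb{E}_{\delta_\theta}[\lambda(\theta)\theta]=\lambda(\theta)\theta$. Writing $\eta=\eta_{\sigma^{PS}}$ and $\theta^*=\eta/\beta$, \autoref{thm:planner} gives $\tau^{FB}_{\sigma^{PS}}(\delta_\theta)=0$ for $\theta\le\theta^*$ and $\tau^{FB}_{\sigma^{PS}}(\delta_\theta)=f(\theta/\eta)$ for $\theta>\theta^*$. Feasibility \eqref{eq:feasible} holds by construction, so only condition $(b)$ of \autoref{def:equilibrium} must be checked: $U(\delta_\theta;\tau^{FB}_{\sigma^{PS}})\le u$, with equality whenever $\tau^{FB}_{\sigma^{PS}}(\delta_\theta)>0$. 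The heart of the argument is an algebraic identity: at an active type, setting $t=f(\theta/\eta)$ gives $m'(t)=\eta/\theta$ by definition of $f$, and combining this with the elasticity identity $\tfrac{m(t)}{t}\,\varepsilon(t)=m'(t)$ turns the equality requirement $\tfrac{m(t)}{t}\lambda(\theta)\theta=u$ into $\lambda(\theta)=\tfrac{u}{\eta}\,\varepsilon(f(\theta/\eta))$. It is precisely this identity that links the buyer's gain $m(t)/t$ to the planner's marginal value $m'(t)$.

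Next I pin down $u$. Because $\lambda$ is continuous and $f$ is continuous with $f(1/\beta)=0$, letting $\theta\downarrow\theta^*$ in the active-type equality and using $\varepsilon(0)=\lim_{t\to 0}\tfrac{m'(t)t}{m(t)}=\beta/\beta=1$ forces $\lambda(\theta^*)=u/\eta$, hence $u=\eta\,\lambda(\eta/\beta)$. Substituting this value back yields the stated active-type condition $\lambda(\theta)=\lambda(\eta/\beta)\,\varepsilon(f(\theta/\eta))$ for $\theta>\theta^*$. For inactive types $\theta\le\theta^*$, where $\tau^{FB}_{\sigma^{PS}}(\delta_\theta)=0$, the payoff is $U(\delta_\theta;\tau^{FB}_{\sigma^{PS}})=\beta\,\lambda(\theta)\theta$, so the inequality $U\le u$ becomes $\lambda(\theta)\theta\le \lambda(\eta/\beta)\,(\eta/\beta)$, which is the stated inactive-type condition. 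Running these equivalences in both directions closes the argument: the conditions are necessary (they are exactly the equilibrium equalities and inequalities), and sufficient (substituting $u=\eta\,\lambda(\eta/\beta)$ together with the two conditions verifies \autoref{def:equilibrium} directly, whereupon uniqueness of equilibrium identifies $\tau^*_{\sigma^{PS}}$ with $\tau^{FB}_{\sigma^{PS}}$).

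The main obstacle I anticipate is the bookkeeping at the boundary type $\theta^*$ and the passage from almost-everywhere to pointwise statements. The equilibrium and first-best conditions only bind $\sigma^{PS}_\mu$-almost everywhere, i.e. for $\mu$-almost every $\theta$, whereas the proposition is phrased for all $\theta$; closing this gap requires the continuity of $\lambda$, $f$, and $\varepsilon$ together with the support of $\mu$ to upgrade the almost-everywhere identities to pointwise ones on the relevant type intervals. Confirming the limiting value $\varepsilon(0)=1$ (which uses $m(0)=0$ and $m'(0)=\beta$ from \autoref{ass:meeting-function}) and checking that the derived $u=\eta\,\lambda(\eta/\beta)$ is consistent with the bound $0<u^*_{\sigma^{PS}}<\beta\sup_{\theta}\lambda(\theta)\theta$ of \autoref{thm:equil} are the remaining technical steps.
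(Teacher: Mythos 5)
Your proposal is correct and follows essentially the same route as the paper's proof: both reduce the claim (via the essential uniqueness in \autoref{thm:planner} and \autoref{thm:equil}) to checking when $\tau^{FB}_{\sigma^{PS}}$ can be sustained as an equilibrium tightness, pin down $u=\eta_{\sigma^{PS}}\lambda(\eta_{\sigma^{PS}}/\beta)$ by a continuity argument at the cutoff, and derive the active-type condition from the identity $m(t)/t=m'(t)/\varepsilon(t)$. The only cosmetic differences are that you obtain $u$ by taking $\theta\downarrow\eta_{\sigma^{PS}}/\beta$ in the active-type equality (using $\varepsilon(0)=1$) rather than from the coincidence of active sets, and you omit the explicit verification of the rational-expectations condition \eqref{eq:equil}, which is anyway implied by feasibility and \eqref{eq:inequality}.
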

\medskip

The proposition relates the efficiency of decentralized search to the elasticity of the meeting function and the buyers' bargaining power, giving rise to a \emph{Hosios condition} \citep{hos90}. Intuitively, the condition ensures that the surplus captured by a buyer within any inactive submarket is no greater than what he could capture from joining the submarket of a type-$\eta_{\sigma^{PS}}/\beta$ seller (the marginal submarket in the first-best outcome). Moreover, a buyer's expected payoff from joining any active submarket is exactly balanced against the marginal value of an additional buyer in that submarket. This balance implies that buyers internalize the congestion and thick market externalities they create when choosing a submarket, aligning their private incentives with the social planner’s objectives and leading to an efficient outcome.

An instructive special case that satisfies both conditions of \autoref{thm:hosios} is when $\lambda(\theta)=0$ for all $\theta\in\Theta$. In this case, buyers would be completely indifferent across all submarkets since they capture no surplus, eliminating any tension between private and social incentives. Clearly, efficiency is straightforward to attain in equilibrium under these conditions. However, this trivial case is ruled out by \autoref{ass:2}, which ensures that buyers capture a strictly positive share of surplus.

\subsection{Constrained-efficient market segmentation}\label{sec:constrained-efficient}
A key takeaway from \autoref{thm:hosios} is the restrictive nature of the Hosios condition. Achieving efficiency in a decentralized search equilibrium often requires a knife-edge condition that rarely holds in many familiar settings. For instance, inefficiency persists when the surplus-splitting function is constant and positive, which is a natural assumption if seller types determine surplus without directly affecting bargaining power. This rarity of efficiency in equilibrium  raises an important question: what is the constrained-efficient segmentation of a search market? That is, how should the social planner optimally segment the market when it can no longer directly control the allocation of buyers across submarkets but can still influence their choices indirectly through the information they observe prior to selecting a submarket?

Formally, each market segmentation $\sigma$ induces an essentially-unique search equilibrium $(\tau^*_\sigma, u^*_\sigma)$, as characterized by \autoref{thm:equil}. A planner who implements a segmentation $\sigma$ expects to create a measure $k\sigma_\mu(d\nu)m(\tau_\sigma^*(\nu))$ of meetings in submarket $\nu\in\supp(\sigma_\mu)$, with each meeting generating an expected surplus of $\mathbb{E}_\nu[\theta]$. Thus, the constrained-efficient market segmentation is the solution to the planner's second-best problem: 
\[
\label{eq:sb}
\tag{SB}
\max_{\sigma}\hspace*{.5em}k\int_{\Delta(\Theta)}m(\tau^*_\sigma(\nu))\mathbb{E}_\nu[\theta]\sigma_\mu(d\nu).
\]

This optimization problem is inherently complex: it involves optimizing over an infinite-dimensional choice variable that enters the objective function in a highly non-linear way. Specifically, the choice of segmentation influences buyers' beliefs about seller types within each submarket, which in turn affects how buyers allocate themselves across submarkets.\footnote{Although \eqref{eq:sb} resembles an information design problem, it differs in a crucial way: whereas a receiver's optimal action in standard information design depends solely on the realized posterior beliefs, buyers' search strategies here depend both on the entire distribution of posterior beliefs.}  In \Cref{necessary}, I derive a sufficient and necessary condition  that decouples these two interdependent effects. However, even this requires solving a constrained concavification problem \citep{kam11} with a continuum of  states. As such,  a general characterization of the constrained-efficient market segmentation appears intractable in many cases.

To make further progress, I introduce the following additional assumption:\medskip

\begin{assumption}
    \label{ass:3}
    For all $\theta\in \Theta$, $\lambda(\theta) = \ell$ for some $\ell\in (0,1]$.
\end{assumption}\medskip

Under this assumption, buyers capture a fixed share $\ell$ of surplus generated in any submarket. Hence, maximizing total surplus is now equivalent to maximizing buyers' welfare. Moreover, \autoref{ass:3} implies that, in equilibrium, any two submarkets with identical posterior means have the same submarket tightness and generate the same surplus, i.e., for $\nu,\nu'\in \supp(\sigma_\mu)$ with $\mathbb{E}_\nu[\theta]=\mathbb{E}_{\nu'}[\theta]$, we have $\tau_\sigma^*(\nu)=\tau^*_\sigma(\nu')$, and $m(\tau^*_\sigma(\nu))\mathbb{E}_\nu[\theta]=m(\tau^*_\sigma(\nu'))\mathbb{E}_{\nu'}[\theta]$. This property allows us to reformulate \eqref{eq:sb} as an optimization problem over posterior-mean distributions with a majorization constraint. 

To that end, let $F$ denote the cumulative distribution function (CDF) associated with the prior $\mu\in\Delta(\Theta)$. Importantly, $F$ is also the posterior-mean distribution induced by the perfect market segmentation.\footnote{Formally, $
F(x)\coloneqq\sigma^{PS}_\mu\big(\{\nu\in \Delta(\Theta):\mathbb{E}_\nu[\theta]\leq x\}\big)$
for all $x\in\Theta$.} More generally, a CDF $H:\Theta\to [0,1]$ represents a posterior-mean distribution induced by some market segmentation if and only if $H$ is a mean-preserving contraction of $F$ \citep{bla53, blackwell_girschik54}. Let $\mpc(F)$ denote the set of all such mean-preserving contractions.

For each $H\in\mpc(F)$, there exists some market segmentation $\sigma$ that induces $H$ as its posterior-mean distribution, and yields an essentially-unique search equilibrium $(\tau^*_\sigma, u^*_\sigma)$. With some abuse of language, I refer to $H$ as the market segmentation, and denote the search equilibrium as $(\tau_H^*, u_H^*)$ where 
$\tau^*_H(\mathbb{E}_\nu[\theta])\coloneqq\tau^*_\sigma(\nu)$ and $u^*_H\coloneqq u^*_\sigma$.\footnote{Note that the submarket tightness function is now defined over $\Theta$, i.e., $\tau_H^*:\Theta\to\mathbb{R}_+$ maps a posterior mean $x\in\Theta$ to submarket tightness $\tau^*_H(x)$.} 

The planner's second-best problem \eqref{eq:sb} can now be reformulated as follows:
\[
\label{eq:sb2}
\tag{SB$'$}
\max_{H\in\mpc(F)}\hspace*{.5em}k\int_{\Theta}m(\tau^*_{H}(x))x d H(x). 
\] 
While this optimization problem remains non-linear, the next result provides a tractable characterization of a constrained-efficient market segmentation. 

\begin{proposition}
\label{thm:secondbest}
 A segmentation $H\in \mpc(F)$ is constrained-efficient if and only if 
    \[
    \label{eq:simplify}
    \tag{5}
H\in\argmax_{\widehat H\in\mpc(F)}\hspace*{.5em}k\int_{\Theta}\tau^*_{H}(x) d\widehat H(x). 
    \]
Furthermore, 
\begin{enumerate}[$(\roman*)$]
    \item If $t\mapsto t/m(t)$ is concave, then $F$ is a constrained-efficient segmentation.
    \item  If $t\mapsto t/m(t)$ is convex, then there exists a cutoff  $\theta^c\in\Theta$, and posterior means $\underline x\coloneqq \mathbb{E}_{F}[\theta|\theta\leq \theta^c]$ and  $\bar x\coloneqq \mathbb{E}_{F}[\theta|\theta\geq \theta^c]$ such that
   \[
   H(x)=\left\{\begin{array}{ccc}
0&\mbox{if}        &  x<\underline x\\
F(\theta^c)&\mbox{if}        &  x\in [\underline x, \bar x)\\
1&\mbox{if}        &  x\geq \bar x\\
   \end{array}\right.
   \]
    is a constrained-efficient segmentation. In this case, $\tau_H^*(\underline x)=0<\tau_H^*(\bar x)$. 
\end{enumerate}
\end{proposition}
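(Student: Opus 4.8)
The plan is to collapse the non-linear objective in \eqref{eq:sb2} to a single scalar, prove the fixed-point characterization \eqref{eq:simplify} by a monotonicity argument, and then read off parts $(i)$ and $(ii)$ from the curvature of $g$.

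\textbf{Step 1 (a scalar identity).} Under \autoref{ass:3} the equilibrium indifference condition in \autoref{thm:equil} reads $\frac{m(\tau^*_H(x))}{\tau^*_H(x)}\ell x=u^*_H$ on active submarkets, so that $m(\tau^*_H(x))\,x=(u^*_H/\ell)\,\tau^*_H(x)$ for every $x$ (trivially where $\tau^*_H(x)=0$). Integrating against $H$ and using feasibility \eqref{eq:feasible} gives $k\int_\Theta m(\tau^*_H(x))\,x\,dH(x)=u^*_H/\ell$, which also recovers \eqref{eq:equil}. Hence maximizing surplus is equivalent to maximizing the scalar $u^*_H$. Moreover, by \autoref{thm:equil} the function $\tau^*_H(x)=g\!\left(\ell x/u^*_H\right)\mathbbm{1}_{[\beta\ell x>u^*_H]}$ depends on $H$ only through $u^*_H$; write $\tau_u$ for this function of $x$ at a generic reservation value $u$, and set $\Phi(u,\widehat H)\coloneqq k\int_\Theta \tau_u(x)\,d\widehat H(x)$.

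\textbf{Step 2 (the fixed-point characterization).} Since $t\mapsto m(t)/t$ is strictly decreasing, $\tau_u(x)$ is strictly decreasing in $u$ on every active submarket, so $u\mapsto\Phi(u,\widehat H)$ is continuous and strictly decreasing wherever positive; feasibility \eqref{eq:feasible} states precisely that $u^*_{\widehat H}$ is the unique root of $\Phi(\cdot,\widehat H)=1$. Strict monotonicity then gives, for every $\widehat H\in\mpc(F)$, the equivalence $u^*_H\ge u^*_{\widehat H}\iff\Phi(u^*_H,\widehat H)\le\Phi(u^*_{\widehat H},\widehat H)=1=\Phi(u^*_H,H)$. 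As $H$ is constrained-efficient iff $u^*_H\ge u^*_{\widehat H}$ for all $\widehat H$ (Step 1), this is exactly $H\in\argmax_{\widehat H}\Phi(u^*_H,\widehat H)$, i.e.\ \eqref{eq:simplify}.

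\textbf{Step 3 (curvature; part $(i)$).} The inverse $g$ of $t\mapsto t/m(t)$ carries the opposite curvature: if $t\mapsto t/m(t)$ is concave then $g$ is convex, and if it is convex then $g$ is concave. In case $(i)$, $\tau^*_F$ is convex in $x$---identically $0$ below the threshold, convex and vanishing at the threshold above it, with an upward kink---so the convex-order inequality $\int\tau^*_F\,d\widehat H\le\int\tau^*_F\,dF$ holds for all $\widehat H\in\mpc(F)$. Thus $F$ solves its own problem \eqref{eq:simplify} and, by Step 2, is constrained-efficient.

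\textbf{Step 4 (part $(ii)$, the main obstacle).} When $t\mapsto t/m(t)$ is convex, $\tau_u$ is \emph{flat-then-concave}: zero on $[0,u/(\beta\ell)]$ and strictly concave above. For such a $\phi=\tau_u$ the linear program $\max_{\widehat H\in\mpc(F)}\int\phi\,d\widehat H$ is attained at an extreme point of $\mpc(F)$, which by \citet{kle21,ari23} is a monotone (interval) partition. Since $\phi$ contributes nothing on its flat part and is concave on its active part, pooling the entire active top interval to a single point weakly improves the objective (Jensen's inequality), while the inactive bottom interval may be pooled freely without changing the value; this collapses the optimum to a binary partition with an active top pool $\bar x$ and an inactive bottom pool $\underline x$. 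It remains to close the fixed point: parameterizing the candidate by the cutoff $\theta^c$, with $\underline x=\mathbb{E}_F[\theta\mid\theta\le\theta^c]$ and $\bar x=\mathbb{E}_F[\theta\mid\theta\ge\theta^c]$, the induced equilibrium places all buyers in the top pool and pins $u^*_H$---hence the threshold $u^*_H/(\beta\ell)$---as a continuous function of $\theta^c$. An intermediate-value argument then selects $\theta^c$ so that simultaneously $\underline x\le u^*_H/(\beta\ell)<\bar x$ (giving $\tau^*_H(\underline x)=0<\tau^*_H(\bar x)$) and the binary partition maximizes \eqref{eq:simplify}. The hard part is precisely this step: proving that the flat-then-concave shape reduces the monotone-partition optimum to two intervals, and verifying the existence of a consistent cutoff, since the threshold separating active from inactive submarkets is itself endogenous to the chosen segmentation.
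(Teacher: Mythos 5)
Your Steps 1--3 are sound and track the paper's argument closely. Step 2 reaches the characterization \eqref{eq:simplify} by exploiting monotonicity of $u\mapsto\Phi(u,\widehat H)$ separately for each fixed $\widehat H$, whereas the paper works with the value function $V(u)\coloneqq\max_{\widehat H\in\mpc(F)}\Phi(u,\widehat H)$ and the segmentation-independent scalar $\bar u$ solving $V(\bar u)=1$. The two routes are equivalent for the general characterization, but the paper's version pays off in part $(ii)$, because it lets the linear program be solved once, at the fixed value $u=\bar u$, rather than at a value that depends on the candidate segmentation.

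The genuine gap is Step 4, and you have in effect flagged it yourself. Two things are missing. First, the claim that the linear program with a flat-then-concave objective is solved by a binary partition is asserted rather than proved, and the heuristic you offer (``pool the entire active top interval,'' appealing to Jensen on the concave piece) is not the right construction: Jensen does not apply cleanly because pooling moves mass across the kink at $u/(\beta\ell)$, and the correct cutoff is \emph{not} the activity threshold. The paper pins the cutoff down by a tangency condition: it defines $G_u(\theta)=g\bigl(\ell\overline X(\theta)/u\bigr)+g'\bigl(\ell\overline X(\theta)/u\bigr)(\ell/u)\bigl(\theta-\overline X(\theta)\bigr)$ with $\overline X(\theta)=\mathbb{E}_F[\tilde\theta\mid\tilde\theta\ge\theta]$, sets $\theta^c_u=\min\{\theta:G_u(\theta)\ge 0\}$, and shows $\underline X(\theta^c_u)\le\theta^c_u\le u/(\beta\ell)<\overline X(\theta^c_u)$---so the optimal top pool deliberately absorbs types that would be inactive on their own. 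Optimality of the resulting binary $H_u$ is then certified by exhibiting a convex price function $p_u$ (the affine tangent to $g$ at $\ell\overline X(\theta^c_u)/u$ above $\theta^c_u$, zero below) and verifying the three conditions of Theorem~1 of \cite{dwo19}; invoking extreme points of $\mpc(F)$ does not substitute for this, since those extreme points are in general bi-pooling partitions, not binary ones. Second, no intermediate-value argument over $\theta^c$ is needed, and relying on one leaves your consistency step unproven. Because the duality construction works for every $u\in(0,\beta\ell)$, one simply applies it at $u=\bar u$; then $H_{\bar u}\in\mathcal H(\bar u)$ gives $k\int\phi(x,\bar u)\,dH_{\bar u}(x)=V(\bar u)=1$, which is exactly the feasibility condition whose unique root is the equilibrium payoff of $H_{\bar u}$, so $u^*_{H_{\bar u}}=\bar u$ and \eqref{eq:simplify} holds. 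The endogeneity of the activity threshold that you identify as ``the hard part'' is dissolved by this decoupling, not by a fixed point in $\theta^c$.
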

\medskip

The above characterization has two implications for the constrained-efficient segmentation problem. First, it simplifies the non-linear optimization problem in \eqref{eq:sb2} by decomposing it into a linear programming problem and a fixed point problem involving the feasibility constraint. The linear programming problem can be addressed by applying tools from the information design literature \citep{dwo19, kle21, kol22, ari23}. Specifically, the optimal segmentation can be characterized by a monotone partition, where the type space is divided into a countable number of connected intervals. Within each interval, the planner allocates each seller type to her own distinct submarket (perfect revelation); or allocates all seller types to same submarket (pooling); or randomly allocates all seller types to one of two submarkets (bi-pooling).

Second, under additional assumptions on $t/m(t)$, which represents the buyers' \emph{odds} of meeting a seller (a 1-in-$t/m(t)$ chance), the constrained-efficient market segmentation takes one of two forms: perfect or binary market segmentation. To build intuition, recall that efficiency requires allocating more meetings to higher types. While the planner cannot directly control the number of meetings each seller type receives, it can indirectly influence how buyers distribute themselves across submarkets through the market segmentation. In particular, the planner can separate higher and lower seller types into distinct submarkets, which induces more buyers to self-select into the submarket for higher seller types, thereby increasing the number of meetings for those sellers

However, this strategy comes with trade-offs. While separating seller types ensures high-type sellers receive more meetings, the concavity of the meeting function implies a reduction in the total number of meetings in the search market. For this segmentation to enhance welfare, the surplus gained from increasing meetings for high-type sellers must outweigh the loss of surplus stemming from a lower number of overall meetings.

\autoref{thm:secondbest} demonstrates that such segmentation is welfare-improving when $t/m(t)$ is concave. Intuitively, as $t$ increases, a buyer's odds of meeting a seller worsen, reflecting the congestion externality buyers face in tighter markets. When $t/m(t)$ is concave, this congestion externality worsens gradually as market tightness increases. Consequently, separating high and low seller types into distinct submarkets encourages many buyers to allocate themselves to the submarket for high-type sellers, mitigating the loss in welfare from a decrease in the total number of meetings. Consequently, perfect segmentation emerges as constrained-efficient. Conversely,
when $t/m(t)$ is convex, the congestion externality worsens more rapidly with increased submarket tightness, making such segmentation less attractive for the planner.

Let us conclude by noting that many meeting functions widely used in the search and matching literature satisfy the concavity or convexity conditions identified in \autoref{thm:secondbest}. For example, the mapping $t\mapsto t/m(t)$ is concave for CES meeting functions with parameter values $\rho\leq 1$, leading to perfect segmentation as the constrained-efficient outcome. In this case, buyers leverage granular ex-ante information about seller types to guide their choice of submarkets, thereby endogenously creating a directed search market where buyers learn a seller's type before meeting her.

In contrast, for Urn-ball meeting functions and for CES meeting functions with parameter values $\rho\geq 1$, the mapping $t\mapsto t/m(t)$ is convex. Similarly, while the generalized frictionless meeting function with $m(t)=\min\{\alpha,\beta t\}$ does not satisfy all the conditions of \autoref{ass:meeting-function}, it inherits the properties of a CES meeting function, which approximates the frictionless meeting function as $\rho\to \infty$. As a result, the constrained-efficient segmentation in all three cases is a binary segmentation. In these cases, buyers have minimal ex-ante information they can use to direct their search. Instead, a random search market emerges, with buyers learning a seller's type only after meeting her.

\newpage
\onehalfspacing
\appendix
\section{Appendix}\label{appendix}
\subsection{Proofs}\label{proofs}
\begin{proof}[Proof of \autoref{thm:planner}]
Let us fix a market segmentation $\sigma$. Let $\mathcal{L}^1(\Delta(\Theta), \sigma_\mu)$ be the space of  $\sigma_\mu$-integrable functions from $\Delta(\Theta)$ to $\mathbb{R}$. Consider the following relaxed problem:
\begin{align*}
    \max_{\substack{\tau \in \mathcal{L}^1(\Delta(\Theta), \sigma_\mu)\\[2pt]\tau\geq 0}}\hspace*{.25em} &k\int_{\Delta(\Theta)}m(\tau(\nu))\mathbb{E}_\nu[\theta]\sigma_\mu(d\nu)\\[6pt]
    \text{s.t. } & \hspace*{.25em}k\int_{\Delta(\Theta)}\tau (\nu)\sigma_\mu(d\nu)\leq 1.
\end{align*}
Notice that the constraint in the relaxed problem must bind at the optimum since $m(t)$ is a strictly increasing function. Hence, any solution to the relaxed problem is also a solution to \eqref{eq:planner-sigma}.

Notice that both the objective and the constraint in the relaxed problem are concave  functionals. Thus, from \cite{luenberger} (Section 8.3, Theorem 1; Section 8.4, Theorem 2), $\tau_\sigma^{FB}$ is a solution to \eqref{eq:planner-sigma} if and only if there exists a Lagrange multiplier $\eta_\sigma> 0$ such that $\tau_\sigma^{FB}$ solves: 
\[
\max_{\tau\in \mathcal{L}^1(\Delta(\Theta), \sigma_\mu) }L(\tau,\eta_\sigma)\coloneqq k\int_{\Delta(\Theta)}\Big[m(\tau(\nu))\mathbb{E}_\nu[\theta]-\eta_\sigma\tau(\nu)\Big]\sigma_\mu(d\nu)+\eta_\sigma.
\]

We can maximize the Lagrangian by pointwise maximizing the integrand, which depends on $\sigma$ only indirectly through $\eta_\sigma$. Hence, we can first pointwise maximize $L(\tau, \eta)$ for an arbitrary $\eta>0$, and then find the value $\eta_\sigma$ that makes the constraint in \eqref{eq:planner-sigma} bind.

To that end, given $\eta>0$, let us find a function $\psi:\Delta(\Theta)\times(0,\infty)\to\mathbb{R}_+$ such that $m'(\psi(\nu, \eta))\mathbb{E}_\nu[\theta]\leq \eta$ for all $\nu\in\supp(\sigma_\mu)$ with  equality if $\psi(\nu, \eta)>0$. These first-order conditions are also sufficient as $m(t)\mathbb{E}_\nu[\theta]-\eta t$ is concave in $t$ for all $\nu\in\Delta(\Theta)$. Recall that $m'(t)$ is strictly decreasing in $t$ with $m'(0)=\beta$. Thus, set $\psi(\nu, \eta)=0$ for all $\nu\in\supp(\sigma_\mu)$ with $\beta\, \mathbb{E}_\nu[\theta]\leq \eta$. Furthermore, for all $\nu\in\supp(\sigma_\mu)$ with $\beta\,\mathbb{E}_\nu[\theta]>\eta$, set $m'(\psi(\nu, \eta))\mathbb{E}_\nu[\theta]=\eta$. As a result, recalling the definition of $f:[1/\beta, \infty)\to \mathbb{R}_+$, we have
\[
\psi(\nu, \eta)\coloneqq\left\{\begin{array}{ccc}
 0    &  \mbox{if} & \beta\,\mathbb{E}_\nu[\theta]\leq \eta\\
f\left(\displaystyle\frac{\mathbb{E}_\nu[\theta]}{\eta}\right) &  \mbox{if} & \beta\,\mathbb{E}_\nu[\theta]> \eta
\end{array}\right..
\]

Since $f$ is a continuous and strictly increasing function with $\lim_{y\to \infty}f(y)=\infty$ and $f(1/\beta)=0$, we have:
\begin{enumerate}[$(i)$]
    \item the mapping $(\nu, \eta)\mapsto \psi(\nu, \eta)$ is continuous in both arguments,
    \item for each $\eta>0$, $\psi(\nu, \eta)\geq \psi(\nu',\eta)$ whenever $\mathbb{E}_\nu[\theta]\geq \mathbb{E}_{\nu'}[\theta]$, 
    \item for each $\nu\in\supp(\sigma_\mu)$, $\psi(\nu, \eta)\leq \psi(\nu,\eta')$ whenever $\eta> \eta'>0$, and 
    \item for $\sigma_\mu$-almost all $\nu$, $\lim_{\eta\to 0}\psi(\nu, \eta)=\infty$, and $\psi(\nu, \eta)=0$ for all $\eta\geq \beta$.
\end{enumerate}
Consequently, $\psi(\nu, \eta)\leq \psi(\delta_1, \eta)<\infty$ for all $\nu\in\supp(\sigma_\mu)$ and all $\eta>0$, which in turn implies $\psi(\cdot, \eta)\in \mathcal{L}^1(\Delta(\Theta), \sigma_\mu)$ for all $\eta>0$.

Next, let us show that there exists a unique multiplier $\eta_\sigma\in (0, \beta)$ such that $\eta_\sigma$ along with $\psi(\cdot,\eta_\sigma)$  lead to a binding feasibility constraint in \eqref{eq:planner-sigma}. To that end, define $\Psi:(0, \infty)\to \mathbb{R}_+$ to be the mapping given by 
\[
\Psi(\eta)\coloneqq k\int_{\Delta(\Theta)}\psi(\nu, \eta)\sigma_\mu(d\nu).
\]
Clearly, the  feasibility constraint in \eqref{eq:planner-sigma} binds at $\eta_\sigma$ if $\Psi(\eta_\sigma)=1$. 

Observe that $\Psi$ is a continuous and weakly decreasing function because $\psi(\nu, \eta)$ is continuous and weakly decreasing in $\eta$ for each $\nu\in\supp(\sigma_\mu)$. Moreover, $\Psi(\eta)=0$ for all $\eta\geq \beta$, and from the monotone convergence theorem, 
\[
\lim_{\eta\to 0}\Psi(\eta)=k\int_{\Delta(\Theta)} \lim_{\eta\to 0} \psi(\nu, \eta)\sigma_\mu(d\nu)=\infty.
\]
Thus, there exists a unique $\eta_\sigma\in (0, \beta)$ such that $\Psi(\eta_\sigma)=1$. The optimal submarket tightness is defined by $\tau_\sigma^{FB}\coloneqq \psi(\cdot, \eta_\sigma)$. Furthermore, any $\tau$ that solves \eqref{eq:planner-sigma} must have $\tau=\tau^{FB}_\sigma$ for $\sigma_\mu$-almost everywhere. In other words, the optimal submarket tightness for a given segmentation $\sigma$ is unique up to the  $\sigma_\mu$-null sets. 
\end{proof}
\medskip

\begin{proof}[Proof of \autoref{thm:first}]
Fix any market segmentation $\sigma$, and let $\tau^{FB}_\sigma$ be the solution to \eqref{eq:planner-sigma}. The proof proceeds by showing that, given the pair $(\sigma, \tau_\sigma^{FB})$, there exists another pair $(\sigma^{PS}, \tilde\tau)$ that achieves a (weakly) higher surplus for each type-$\theta$ seller. To that end, define a new submarket tightness function $\tilde\tau:\Delta(\Theta)\to \mathbb{R}_+$ given by 
\[
\tilde\tau(\nu)=\left\{\begin{array}{ccc}
\int_{\Delta(\Theta)} \tau_\sigma^{FB}(\nu') \sigma(d\nu', \theta)& \mbox{if}    &\nu=\delta_\theta  \\
 \tau_\sigma^{FB}(\nu)&    & \mbox{otherwise}
\end{array}\right.,
\]
and notice that
\[
\int_{\Delta(\Theta)}\tilde \tau(\nu)\sigma_\mu^{PS}(d\nu)=\int_\Theta \tilde \tau(\delta_\theta)\mu(d\theta)=\int_{\Delta(\Theta)} \tau^{FB}_\sigma(\nu) \sigma_\mu(d\nu),
\]
where the first equality follows from the definition of $\sigma^{PS}$ and the second equality follows by construction of $\tilde \tau$. Therefore, $\tilde\tau\in\mathcal{L}^1(\Delta(\Theta), \sigma^{PS}_\mu)$ and the pair
$(\sigma^{PS}, \tilde\tau)$ satisfies \eqref{eq:feasible}. 

Comparing the expected surplus generated from a type-$\theta$ seller under $(\sigma, \tau^{FB}_\sigma)$ to that under $(\sigma^{PS}, \tilde \tau)$, we have
\begin{align*}
k\cdot \theta\int_{\Delta(\Theta)} m(\tau^{FB}_\sigma(\nu))\sigma(d\nu, \theta)& \leq  k\cdot \theta \cdot m\left(\int_{\Delta(\Theta)}\tau^{FB}_\sigma(\nu)\sigma(d\nu,\theta)\right)\\[6pt]
    &= k\cdot \theta \cdot m(\tilde\tau(\delta_\theta))\\[6pt]
    &=k\cdot \theta\int_{\Delta(\Theta)} m(\tilde \tau(\nu))\sigma^{PS}(d\nu, \theta),
\end{align*}
where the inequality follows from the concavity of $m$, the first equality follows by definition of $\tilde\tau$, and the last follows by the definition of $\sigma^{PS}$. 

Thus, for any arbitrary market segmentation $\sigma$, the pair $(\sigma, \tau_\sigma^{FB})$ generates a (weakly) lower expected surplus than the pair $(\sigma^{PS}, \tilde\tau)$. Of course, $(\sigma^{PS}, \tilde\tau)$ generates a (weakly) lower expected surplus than the pair $(\sigma^{PS}, \tau^{FB}_{\sigma^{PS}})$, because the latter solves \eqref{eq:planner-sigma} when $\sigma=\sigma^{PS}$. As a result, the pair $(\sigma^{PS}, \tau^{FB}_{\sigma^{PS}})$ is a solution to \eqref{eq:planner}.
\end{proof}\medskip

\begin{proof}[Proof of \autoref{thm:equil}] Let us first establish a sufficient and necessary condition for a submarket to be active. We shall then use this result to show that an essentially-unique equilibrium exists for any given market segmentation.

\begin{lemma}
\label{lemma:active markets}
Suppose $(\tau^*_\sigma, u^*_\sigma)$ is a search equilibrium for a market segmentation $\sigma$. For $\sigma_\mu$-almost all $\nu$, $\tau^*_\sigma(\nu)>0$ if and only if $\beta\,\mathbb{E}_\nu[\lambda(\theta)\theta]>u^*_\sigma$.
\end{lemma}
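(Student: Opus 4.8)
The plan is to read both implications directly off the equilibrium conditions in \autoref{def:equilibrium}, exploiting the monotonicity of the buyer's meeting probability. The single ingredient I need is the behavior of the map $t\mapsto m(t)/t$: under \autoref{ass:meeting-function} it is continuous and strictly decreasing, with $\lim_{t\to 0}m(t)/t=\beta$ and $\lim_{t\to\infty}m(t)/t=0$. In particular $m(t)/t<\beta$ for every $t>0$, while the continuous (convention) value at $t=0$ is exactly $\beta$. Since $U(\nu;\tau)=\tfrac{m(\tau(\nu))}{\tau(\nu)}\mathbb{E}_\nu[\lambda(\theta)\theta]$, this says that a buyer's payoff in submarket $\nu$ attains its largest value, $\beta\,\mathbb{E}_\nu[\lambda(\theta)\theta]$, precisely in the limit of vanishing tightness, and is strictly smaller once the submarket carries a positive ratio of buyers to sellers.

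For the \emph{only if} direction, I would suppose $\tau^*_\sigma(\nu)>0$. The complementary-slackness half of condition $(b)$ then forces equality in \eqref{eq:inequality}, i.e. $\tfrac{m(\tau^*_\sigma(\nu))}{\tau^*_\sigma(\nu)}\mathbb{E}_\nu[\lambda(\theta)\theta]=u^*_\sigma$. Because $m(t)/t<\beta$ strictly for any $t>0$, the left-hand side is strictly below $\beta\,\mathbb{E}_\nu[\lambda(\theta)\theta]$, which yields $u^*_\sigma<\beta\,\mathbb{E}_\nu[\lambda(\theta)\theta]$, as required.

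For the \emph{if} direction I would argue by contraposition. Suppose $\tau^*_\sigma(\nu)=0$. The inequality half of condition $(b)$ still applies for $\sigma_\mu$-almost all such $\nu$, giving $U(\nu;\tau^*_\sigma)\le u^*_\sigma$; evaluating $U$ at $\tau^*_\sigma(\nu)=0$ through the convention $m(0)/0=\beta$ yields $U(\nu;\tau^*_\sigma)=\beta\,\mathbb{E}_\nu[\lambda(\theta)\theta]$, and hence $\beta\,\mathbb{E}_\nu[\lambda(\theta)\theta]\le u^*_\sigma$. Contraposing, $\beta\,\mathbb{E}_\nu[\lambda(\theta)\theta]>u^*_\sigma$ implies $\tau^*_\sigma(\nu)>0$. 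Combining the two directions establishes the equivalence for $\sigma_\mu$-almost all $\nu$.

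The argument is short, and the only point requiring care — really the main obstacle — is the bookkeeping at the boundary $\tau(\nu)=0$: the expression defining $U(\nu;\tau)$ has $\tau(\nu)$ in the denominator, so evaluating a buyer's payoff in an inactive submarket relies on the continuous extension $m(0)/0=\beta$ of the odds of meeting, exactly the value $\beta$ introduced in the model. I would also make sure every inequality is asserted only off a $\sigma_\mu$-null set, consistent with the ``$\sigma_\mu$-almost all'' qualifier in \autoref{def:equilibrium}, so that the claimed characterization is itself an almost-everywhere statement.
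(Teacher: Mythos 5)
Your proof is correct and follows essentially the same route as the paper's: equality in \eqref{eq:inequality} plus $m(t)/t<\beta$ for $t>0$ gives the only-if direction, and the inequality in \eqref{eq:inequality} evaluated at $\tau(\nu)=0$ via the convention $m(0)/0=\beta$, followed by contraposition, gives the if direction. The extra care you take with the boundary convention and the almost-everywhere qualifiers matches what the paper implicitly relies on.
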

\begin{proof}[Proof of \autoref{lemma:active markets}]\

(\emph{Only-if direction}): Suppose submarket $\nu\in \supp(\sigma_\mu)$ is active, i.e., $\tau^*_\sigma(\nu)>0$, which implies 
\[
u^*_\sigma=\frac{m(\tau^*_\sigma(\nu))}{\tau^*_\sigma(\nu)}\mathbb{E}_\nu[\lambda(\theta)\theta]< \beta\,\mathbb{E}_\nu[\lambda(\theta)\theta],
\]
where the equality follows from \eqref{eq:inequality}, and the inequality follows from the fact $\beta$ is the highest probability with which a buyer meets a seller.
\medskip

(\emph{If direction}): Suppose submarket $\nu\in \supp(\sigma_\mu)$ is inactive, i.e., $\tau^*_\sigma(\nu)=0$, which implies 
\[
u^*_\sigma\geq\frac{m(0)}{0}\mathbb{E}_\nu[\lambda(\theta)\theta]= \beta\,\mathbb{E}_\nu[\lambda(\theta)\theta],
\]
where the inequality follows from \eqref{eq:inequality}, and the equality follows by definition that $m(0)/0=\beta$. Thus, from the contrapositive,  $\beta\,\mathbb{E}_\nu[\lambda(\theta)\theta]>u^*_\sigma$ implies $\tau^*_\sigma(\nu)>0$. 
\end{proof}
\medskip

Let us now turn to proving the existence of an essentially-unique equilibrium. By \autoref{lemma:active markets}, any search equilibrium $(\tau^*_\sigma, u^*_\sigma)$ satisfies $\tau^*_\sigma(\nu)=0$ for $\sigma_\mu$-almost all $\nu$ such that $ \beta\,\mathbb{E}_\nu[\lambda(\theta)\theta]\leq u^*_\sigma$, and  $\tau^*_\sigma(\nu)>0$ for $\sigma_\mu$-almost all $\nu$ such that $\beta \,\mathbb{E}_\nu[\lambda(\theta)\theta]>u^*_\sigma$. In the latter case, \eqref{eq:inequality} implies that 
\[
\frac{\tau^*_\sigma(\nu)}{m(\tau^*_\sigma(\nu))}=\frac{\mathbb{E}_\nu[\lambda(\theta)\theta]}{u^*_\sigma},
\]
which from the definition of $g$ implies that $\tau^*_\sigma(\nu)=g\left(\mathbb{E}_\nu[\lambda(\theta)\theta]/u^*_\sigma\right)$. In other words, there exists a mapping $\phi:\Delta(\Theta)\times (0,\infty)\to \mathbb{R}_+$ given by
\[
\phi(\nu, u)\coloneqq \left\{\begin{array}{ccc}0 & \mbox{if} & \beta\,\mathbb{E}_\nu[\lambda(\theta)\theta]\leq u\\[4pt]
\displaystyle g\left(\frac{\mathbb{E}_\nu[\lambda(\theta)\theta]}{u}\right) & \mbox{if} & \beta\,\mathbb{E}_\nu[\lambda(\theta)\theta]>u
  \end{array}\right.
  \]
such that $\tau^*_\sigma=\phi(\cdot, u^*_\sigma)$ for $\sigma_\mu$-almost everywhere. 

Since $g$ is a continuous and  strictly increasing function with $\lim_{y\to \infty}g(y)=\infty$ and $g(1/\beta)=0$, we have:
\begin{enumerate}[$(i)$]
    \item the mapping $(\nu, u)\mapsto \phi(\nu, u)$ is continuous in both arguments,
    \item for each $u>0$, $\phi(\nu, u)\geq \psi(\nu',u)$ whenever $\mathbb{E}_\nu[\lambda(\theta)\theta]\geq \mathbb{E}_{\nu'}[\lambda(\theta)\theta]$, and
    \item for each $\nu\in\supp(\sigma_\mu)$, $\phi(\nu, u)\leq \phi(\nu,u')$ whenever $u>u'>0$, and
    \item for $\sigma_\mu$-almost all $\nu$, $\lim_{u\to 0}\phi(\nu, u)=\infty$, and $\psi(\nu, u)=0$ for all $u\geq \beta\sup_{\theta\in\Theta}\lambda(\theta)\theta$.

\end{enumerate}
Consequently, $\phi(\nu, u)\leq \sup_{\nu'\in \Delta(\Theta)}\phi(\nu', u)\leq g(1/u)<\infty$ for all $\nu\in\supp(\sigma_\mu)$, which in turn implies $\phi(\cdot, u)\in \mathcal{L}^1(\Delta(\Theta), \sigma_\mu)$ for all $u>0$. 

Next, observe that a pair $(\tau^*_\sigma, u^*_\sigma)$ jointly satisfies  \eqref{eq:inequality} and \eqref{eq:equil} if and only if $u^*_\sigma$ is a fixed point of the function $\Phi:(0, \infty)\to \mathbb{R}_+$ given by 
\[
\Phi(u)\coloneqq k\int_{\Delta(\Theta)}m(\phi(\nu, u))\mathbb{E}_\nu[\lambda(\theta)\theta]\sigma_\mu(d\nu).
\]
The function $\Phi(u)$ is a continuous and weakly decreasing function because $\phi(\nu, u)$ is continuous and weakly decreasing in $u$ for each $\nu\in\supp(\sigma_\mu)$. Moreover, $\Phi(u)=0$ for all $u\geq \beta\sup_{\theta\in\Theta}\lambda(\theta)\theta$, and 
\begin{align*}
    \lim_{u\to 0}\Phi(u)&=k\int_{\Delta(\Theta)} \lim_{u\to 0}m(\phi(\nu, u))\mathbb{E}_\nu[\lambda(\theta)\theta]\sigma_\mu(d\nu)\\[6pt]
    &=\alpha k\mathbb{E}_{\mu}\big[\lambda(\theta)\theta\big]\\[6pt]
    &>0,
\end{align*}
where the first equality follows from the dominated convergence theorem, the second equality follows because $\lim_{u\to 0}\phi(\nu, u)=\infty$ for all $\nu$ such that $\mathbb{E}_{\nu}[\lambda(\theta)\theta)]>0$, and because $\lim_{t\to \infty}m(t)=\alpha$, and the inequality follows from \autoref{ass:2}.

Thus far, we have shown that $\Phi(u)$ is a continuous and (weakly) decreasing function with $\lim_{u\to \beta\sup_{\theta\in\Theta}\lambda(\theta)\theta}\Phi(u)=0<\lim_{u\to 0}\Phi(u)$. 
Therefore, it has a unique fixed point in the interval $(0, \beta\sup_{\theta\in\Theta}\lambda(\theta)\theta)$.  We have thus established the existence of an essentially-unique pair $(\tau^*_\sigma, u^*_\sigma)$ with $u^*_\sigma=\Phi(u^*_\sigma)$ and $\tau^*_\sigma(\cdot)=\phi(\cdot, u^*_\sigma)$, such that $(\tau^*_\sigma, u^*_\sigma)$ jointly satisfies \eqref{eq:inequality} and \eqref{eq:equil}. In other words, there exists an essentially-unique search equilibrium $(\tau^*_\sigma, u^*_\sigma)$ for each market segmentation $\sigma$.
\end{proof}\medskip

\begin{proof}[Proof of \autoref{thm:hosios}]\
(\emph{Only-if} direction):  Suppose the pair $(\sigma^{PS}, \tau^*_{\sigma^{PS}})$ solves \eqref{eq:planner}. By \autoref{thm:planner}, $\tau^*_{\sigma^{PS}}=\tau^{FB}_{\sigma^{PS}}$ for $\sigma^{PS}_\mu$-almost everywhere, or equivalently, $\tau^*_{\sigma^{PS}}(\delta_\theta)=\tau^{FB}_{\sigma^{PS}}(\delta_\theta)$ for $\mu$-almost all $\theta$. 

From \autoref{thm:planner}, $\tau^{FB}_{\sigma^{PS}}(\delta_\theta)=0$ for all $\theta\leq \eta_{\sigma^{PS}}/\beta$, and $\tau^{FB}_{\sigma^{PS}}(\delta_\theta)>0$ for all $\theta> \eta_{\sigma^{PS}}/\beta$. Similarly, from \autoref{thm:equil}, $\tau^*_{\sigma^{PS}}(\delta_\theta)=0$ for all $\theta$ such that $\lambda(\theta)\theta\leq u^*_{\sigma^{PS}}/\beta$, and $\tau^*_{\sigma^{PS}}(\delta_\theta)>0$ for all $\theta$ such that $\lambda(\theta)\theta> u^*_{\sigma^{PS}}/\beta$.  Since the set of active and inactive submarkets coincide almost everywhere under the two submarket tightness functions, we have that $\theta\leq \eta_{\sigma^{PS}}/\beta$ if and only if $\lambda(\theta)\theta\leq u^*_{\sigma^{PS}}/\beta$. By the continuity of the surplus-splitting function, we can conclude that 
\[
u^*_{\sigma^{PS}}=\lambda\left(\frac{\eta_{\sigma^{PS}}}{\beta}\right)\eta_{\sigma^{PS}}.
\]
Thus, for all $\theta\leq \eta_{\sigma^{PS}}/\beta$, 
\[
\lambda(\theta)\theta\leq \lambda\left(\frac{\eta_{\sigma^{PS}}}{\beta}\right)\frac{\eta_{\sigma^{PS}}}{\beta}.\]
Moreover, for all $\theta>\eta_{\sigma^{PS}}/\beta$, \autoref{thm:planner} implies that 
\[
m'(\tau^{FB}_{\sigma^{PS}}(\delta_\theta))\theta=\eta_{\sigma^{PS}},
\]
while \autoref{thm:equil} implies that
\[
\frac{m(\tau^*_{\sigma^{PS}}(\delta_\theta))}{\tau^*_{\sigma^{PS}}(\delta_\theta)}\lambda(\theta)\theta=\lambda\left(\frac{\eta_{\sigma^{PS}}}{\beta}\right)\eta_{\sigma^{PS}}.
\] 
 Combining the above two expressions and noting that $\tau^{FB}_{\sigma^{PS}}=\tau^*_{\sigma^{PS}}$ almost everywhere yields 
\[
\lambda(\theta)=\lambda\left(\frac{\eta_{\sigma^{PS}}}{\beta}\right)\cdot\varepsilon\left(f\left(\frac{\theta}{\eta_{\sigma^{PS}}}\right)\right)
\]
for all $\theta>\eta_{\sigma^{PS}}/\beta$. 
\medskip

(\emph{If} direction):  Suppose the surplus-splitting function, $\lambda$, satisfies the properties given in the proposition. We first show that  $(\tau^{FB}_{\sigma^{PS}}, \lambda({\eta_{\sigma^{PS}}}/{\beta})\cdot \eta_{\sigma^{PS}})$ is a search equilibrium of $\sigma^{PS}$. 

To that end, for $\theta\leq \eta_{\sigma^{PS}}/\beta$, we have 
\[
U(\delta_\theta;\tau^{FB}_{\sigma^{PS}}, \sigma^{PS})=\underbrace{\frac{m(0)}{0}}_{=\beta}\lambda(\theta)\theta\leq\lambda\left(\frac{\eta_{\sigma^{PS}}}{\beta}\right)\eta_{\sigma^{PS}},
\]
where the equality follows from the fact that $\tau^{FB}_{\sigma^{PS}}(\delta_\theta)=0$ for $\theta\leq \eta_{\sigma^{PS}}/\beta$ (\autoref{thm:planner}), and the inequality follows by the assumption on $\lambda$ given in the proposition. 

For $\theta>\eta_{\sigma^{PS}}/\beta$, we have 
\begin{align*}
U(\delta_\theta;\tau^{FB}_{\sigma^{PS}}, \sigma^{PS})&=\frac{m\left(f\left(\frac{\theta}{\eta_{\sigma^{PS}}}\right)\right)}{f\left(\frac{\theta}{\eta_{\sigma^{PS}}}\right)}\lambda(\theta)\theta\\[8pt]
&=m'\left(f\left(\frac{\theta}{\eta_{\sigma^{PS}}}\right)\right)\lambda\left(\frac{\eta_{\sigma^{PS}}}{\beta}\right)\theta\\[8pt]
&=\lambda\left(\frac{\eta_{\sigma^{PS}}}{\beta}\right)\eta_{\sigma^{PS}},
\end{align*}
where the first equality follows from the characterization of $\tau^{FB}_{\sigma^{PS}}(\delta_\theta)$ for $\theta> \eta_{\sigma^{PS}}/\beta$ (\autoref{thm:planner}), the second equality follows by the assumption on $\lambda$ given in the proposition, and the last equality follows from the fact that $f$ is the inverse of the mapping $t\mapsto 1/m'(t)$. Therefore, the pair $(\tau^{FB}_{\sigma^{PS}}, \lambda(\eta_{\sigma^{PS}})\eta_{\sigma^{PS}})$ satisfies \eqref{eq:inequality} of \autoref{def:equilibrium}. 

Additionally,
\begin{align*}
    k\int_{\Delta(\Theta)}m\big(\tau^{FB}_{\sigma^{PS}}(\delta_\theta)\big)\mathbb{E}_\nu[\lambda(\theta)\theta]\sigma_\mu^{PS}(d\nu)=& k\int_{\eta_{\sigma^{PS}}/\beta}^1 m\left(f\left(\frac{\theta}{\eta_{\sigma^{PS}}}\right)\right)\lambda(\theta)\theta\mu(d\theta)\\[8pt]
     =&\lambda\left(\frac{\eta_{\sigma^{PS}}}{\beta}\right)\eta_{\sigma^{PS}} \cdot k\int_{\eta_{\sigma^{PS}}/\beta}^1  f\left(\frac{\eta_{\sigma^{PS}}}{\theta}\right)\mu(d\theta)\\[8pt]
     =&\lambda\left(\frac{\eta_{\sigma^{PS}}}{\beta}\right)\eta_{\sigma^{PS}},
\end{align*}
where the first equality follows from the definition of $\sigma^{PS}$ and the characterization of $\tau^{FB}_{\sigma^{PS}}$ in \autoref{thm:planner}, the second equality follows by the assumption on $\lambda$ given in \autoref{thm:hosios} and the fact that $f$ is the inverse of the mapping $t\mapsto 1/m'(t)$, and the last equality follows by the construction of $\eta_{\sigma^{PS}}$ as given in \autoref{thm:planner}. Thus, the pair $(\tau^{FB}_{\sigma^{PS}}, \lambda({\eta_{\sigma^{PS}}}/{\beta})\cdot\eta_{\sigma^{PS}})$ also satisfies \eqref{eq:equil} of \autoref{def:equilibrium}.

We can therefore conclude that $(\tau^{FB}_{\sigma^{PS}}, \lambda({\eta_{\sigma^{PS}}}/{\beta})\cdot\eta_{\sigma^{PS}})$ is a search equilibrium of $\sigma^{PS}$. However, $(\tau^*_{\sigma^{PS}}, u^*_{\sigma^{PS}})$ is the essentially-unique search equilibrium of $\sigma^{PS}$. Thus, $\tau^*_{\sigma^{PS}}=\tau^{FB}_{\sigma^{PS}}$ for $\sigma_\mu^{PS}$-almost everywhere, and $u^*_{\sigma^{PS}}=\lambda({\eta_{\sigma^{PS}}}/{\beta})\cdot\eta_{\sigma^{PS}}$. Since $(\sigma^{PS}, \tau^{FB}_{\sigma^{PS}})$ solves \eqref{eq:planner}, the almost-everywhere equivalence between $\tau^*_{\sigma^{PS}}$ and $\tau^{FB}_{\sigma^{PS}}$ implies that $(\sigma^{PS}, \tau^*_{\sigma^{PS}})$ also solves \eqref{eq:planner}, giving us the desired result.
\end{proof}\medskip

\begin{proof}[Proof of \autoref{thm:secondbest}]\ 
Let us first establish some facts that will be useful in the proof: Under \autoref{ass:3}, $\beta\sup_{\theta\in\Theta}\lambda(\theta)\theta=\beta\ell$. With some abuse of notation, for $x\in\Theta$ and $u>0$, let 
\[
\phi(x, u)\coloneqq \left\{\begin{array}{ccc}0 & \mbox{if} & \ell x\leq u/\beta\\[4pt]
\displaystyle g\left(\frac{\ell x}{u}\right) & \mbox{if} & \ell x >u/\beta
  \end{array}\right.
  \]
which is simply a reformulation of the function $\phi(\nu, u)$ as defined in the proof of \autoref{thm:equil} into the current environment where payoffs depend only on the posterior mean. 

Given $u>0$, consider the following linear program: 
\begin{align*}
\label{eq:lp}
\tag{LP-$u$}
\max_{\widehat H\in \mpc(F)}\hspace*{.5em}k\int_{\Theta}\phi(x, u)\hspace{.1em}  d \widehat H(x).
\end{align*}
Let ${V}(u)$ be the value function corresponding to \eqref{eq:lp}, and let $\mathcal{H}(u)$ be the set of maximizers. Since the objective is a continuous function and $\mpc(F)$ is compact, $V(u)$ is continuous in $u$, and $\mathcal{H}(u)$ is non-empty and compact-valued for each $u>0$. Importantly, by construction, $H$ satisfies \eqref{eq:simplify} if and only if $H\in\mathcal{H}(u_H^*)$. 

Observe that $V$ is a (weakly) decreasing function as $\phi(x, u)$ is a (weakly) decreasing function of $u$ for all $x\in\Theta$. Furthermore, for almost all $x\in\Theta$, $\phi(x, u)$ grows arbitrarily large as $u\to 0$ and $\phi(x, u)=0$ for all $u\geq \beta\ell$, which implies that $\lim_{u\to 0}V(u)=\infty$ and $V(u)=0$ for all $u\geq \beta\ell$. Thus, there is a unique point $\bar{u}\in (0,\beta\ell)$ such that $V(\bar u)=1$. 
\medskip

\begin{lemma}\label{lemma:3}
Consider a market segmentation $H\in \mpc(F)$, and let $(\tau_H^*,  u_H^*)$ be its essentially-unique search equilibrium. Then $u_H^*\leq \bar u$. Furthermore, $H\in\mathcal{H}(u^*_H)$ if and only if $u_H^*=\bar u$. 
\end{lemma}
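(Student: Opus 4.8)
The plan is to tie the equilibrium reservation value $u_H^*$ to the value function $V$ through the equilibrium feasibility constraint, and then to exploit a strict single-crossing of $V$ through the level $1$ at $\bar u$. First I would record the identity that makes the link. By \autoref{thm:equil} (in its posterior-mean reformulation) the equilibrium tightness satisfies $\tau_H^*(x)=\phi(x,u_H^*)$, and the feasibility constraint \eqref{eq:feasible} binds, so
\[
k\int_\Theta \phi(x,u_H^*)\,dH(x)=1.
\]
Because $H\in\mpc(F)$ is a feasible competitor in \eqref{eq:lp} at $u=u_H^*$, this immediately yields $V(u_H^*)\ge 1$, with equality exactly when $H$ is optimal, i.e. $H\in\mathcal{H}(u_H^*)$ if and only if $V(u_H^*)=1$. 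Thus both assertions reduce to locating where $V$ equals or exceeds $1$, and \autoref{thm:equil} already guarantees $u_H^*\in(0,\beta\ell)$.

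The core step is to show that $V(u)>1$ for $u\in(0,\bar u)$ and $V(u)<1$ for $u\in(\bar u,\beta\ell)$, which together with $V(\bar u)=1$ pins down $\bar u$ as the unique solution of $V(u)=1$ on $(0,\beta\ell)$. For $u<\bar u$ I would take a maximizer $\widehat H\in\mathcal{H}(\bar u)$; since $V(\bar u)=1>0$, $\widehat H$ must place positive mass on the active region $\{x:\ell x>\bar u/\beta\}$, and on that set $\ell x/u>\ell x/\bar u$, so strict monotonicity of $g$ gives $\phi(x,u)>\phi(x,\bar u)$, while $\phi(\cdot,u)\ge\phi(\cdot,\bar u)$ holds everywhere. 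As $\widehat H$ is feasible for \eqref{eq:lp} at $u$, this yields $V(u)\ge k\int\phi(x,u)\,d\widehat H> k\int\phi(x,\bar u)\,d\widehat H=1$. For $u>\bar u$ I would run the mirror-image argument with a maximizer $\widehat H\in\mathcal{H}(u)$: either $V(u)=0<1$, or $\widehat H$ charges $\{x:\ell x>u/\beta\}$, on which $\phi(x,\bar u)>\phi(x,u)$, so that $V(u)=k\int\phi(x,u)\,d\widehat H< k\int\phi(x,\bar u)\,d\widehat H\le V(\bar u)=1$.

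With this single-crossing established both conclusions follow quickly. For the bound $u_H^*\le\bar u$: if instead $u_H^*>\bar u$, then $u_H^*\in(\bar u,\beta\ell)$ forces $V(u_H^*)<1$, contradicting $V(u_H^*)\ge 1$. For the equivalence: $H\in\mathcal{H}(u_H^*)$ holds if and only if $V(u_H^*)=k\int_\Theta\phi(x,u_H^*)\,dH(x)=1$, and since $V(u)=1$ on $(0,\beta\ell)$ only at $u=\bar u$, this is equivalent to $u_H^*=\bar u$. The main obstacle is precisely the strict single-crossing: the weak monotonicity of $V$ already noted before the lemma is not enough to exclude $V\equiv 1$ on an interval, and the delicate point is verifying that the relevant maximizer actually charges the active region so that the strict inequality $\phi(x,u)\neq\phi(x,u')$ bites on a set of positive measure — this is exactly where the fact $V(\bar u)=1>0$ is doing the work.
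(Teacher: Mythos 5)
Your proof is correct and follows essentially the same route as the paper: the binding feasibility constraint gives $V(u_H^*)\geq 1=V(\bar u)$ with equality iff $H\in\mathcal{H}(u_H^*)$, and the monotonicity of $V$ together with the uniqueness of the solution to $V(u)=1$ deliver both claims. Your write-up is in fact slightly more complete, since you explicitly establish the strict single-crossing of $V$ through the level $1$ (and hence the uniqueness of $\bar u$) via maximizers charging the active region, whereas the paper asserts this uniqueness from weak monotonicity, continuity, and the boundary limits alone.
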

\begin{proof}[Proof of \autoref{lemma:3}]\
For each $H\in \mpc(F)$,
\[
V(\bar u)=1=k\int_\Theta \phi(x, u^*_H)d H(x) \leq  V(u^*_{H}),
\]
where the first equality follows by construction of $\bar u$, the second equality follows from \eqref{eq:feasible}, and the inequality follows because $V$ is the value function of \eqref{eq:lp}. Since $V$ is a weakly decreasing function, we have $\bar u\geq u_{H}^*$, establishing the first desired result.

Furthermore, by construction, the above inequality holds with equality if and only if $H\in\mathcal{H}(u^*_H)$. However, since $\bar u$ is the unique value for which $V(u)=1$, the above inequality holds with equality if and only $u_H^*=\bar u$. This establishes the second desired result. 
\end{proof}
\medskip

Let us now prove the general characterization for constrained-efficient market segmentations given in \eqref{eq:simplify}, and then prove the characterization for the special cases when $t/m(t)$ is convex or concave. 
\medskip

\noindent\textbf{General case:}\

(\emph{Only-if} direction): Suppose $H$ is a constrained-efficient market segmentation, i.e., $H$ solves \eqref{eq:sb2}. We want to show that $H\in\mathcal{H}(u^*_H)$. By \autoref{lemma:3}, this is equivalent to showing that $u_H^*=\bar u$. Furthermore, \autoref{lemma:3} already establishes that $\bar u\geq u_H^*$. Thus, to show the desired ``only if" direction, it suffices to show that $u_H^*\geq \bar u$. 

Notice that under \autoref{ass:3}, the buyers' payoff is equivalent to expected total surplus scaled down by a constant $\ell$. Therefore, because $H$ solves \eqref{eq:sb2}, we have $u^*_H\geq u^*_{\widehat H}$ for all $\widehat H\in\mpc(F)$. In particular---recalling that $\mathcal{H}(\bar u)$ is non-empty---for each $\widehat H\in\mathcal{H}(\bar u)$, we have $u_H^*\geq u^*_{\widehat H}=\bar u$, where the last equality follows from \autoref{lemma:3}. \medskip

(\emph{If} direction): Suppose $H$ satisfies \eqref{eq:simplify}, i.e., $H\in\mathcal{H}(u^*_H)$. By \autoref{lemma:3}, $\bar u=u_H^*\geq u_{\widehat H}^*$ for all $\widehat H\in\mpc(F)$. In other words, for all $\widehat H\in\mpc(F)$,
\[
\underbrace{\ell\cdot k\int_\Theta m(\tau_H^*(x))xdH(x)}_{=u_H^*}\geq  \underbrace{\ell\cdot k\int_\Theta m(\tau_{\widehat H}^*(x))xd\widehat H(x)}_{=u_{\widehat H}^*},
\]
where the equalities follow by \eqref{eq:equil}. The above inequality then establishes the desired result: $H$ solves \eqref{eq:sb2}. 

We have thus far shown that $H$ is a constrained-efficient market segmentation if and only if $H\in\mathcal{H}(\bar u)\equiv\mathcal{H}(u_H^*)$. This concludes the characterization for the general statement.
\medskip

Let us proceed by showing that when the mapping $t\mapsto t/m(t)$ is concave, then $F\in\mathcal{H}(u)$ for all $u\in(0,\beta\ell)$. Similarly, when the mapping $t\mapsto t/m(t)$ is convex, we show that for each $u\in(0,\beta\ell)$, there exists a binary market segmentation in $\mathcal{H}(u)$. 

First, suppose the mapping $t\mapsto t/m(t)$ is \textbf{concave}, and therefore, $g$ is convex. In this case, $\phi(x,  u)=g({\ell\cdot x}/{ u})\mathbbm{1}_{[\beta\ell x>  u]}$ is convex in $x$ for all $u\in(0,\beta\ell)$.\footnote{Recall that $\lim_{y\to 1/\beta}g(y)=0$.} As a result, for all  $H', H''\in\mpc(F)$ where $H'$ is a mean-preserving contraction of $H''$, we have
\[
k\int_\Theta \phi(x, u)dH'(x)\leq k\int_\Theta \phi(x, u)dH''(x).
\]
Therefore, $F\in\mathcal{H}(u)$ for all $u\in(0,\beta\ell)$. Consequently, $F\in\mathcal{H}(\bar u)$, i.e., there exists a perfect market segmentation is constrained-efficient, establishing Point $(i)$ of \autoref{thm:secondbest}.  \\

Second, suppose the mapping $t\mapsto t/m(t)$ is \textbf{convex}, and therefore, $g$ is concave. In this case,  $\phi(x,  u)=g({\ell\cdot x}/{ u})\mathbbm{1}_{[\ell\beta x>u]}$ is piece-wise concave in $x$ for all $u\in(0,\beta\ell)$. Specifically, it is constant over $[0,   u/(\beta\ell)]$ and concave over $[u/(\beta\ell), 1]$.\footnote{\cite{kol22} describe such a $\phi$ function as ``S-shaped." They show that if $\phi$ is continuously differentiable and S-shaped, there exists a solution to \eqref{eq:lp} that is an upper-censorship: perfectly segment types below a cutoff and pool all the types above the cutoff. However, their results cannot be directly applied here as the $\phi(x, u)$ is not differentiable at $x= \frac{u}{\beta\ell}$. Additionally, \autoref{thm:secondbest} goes further than an upper censorship; it claims that a binary segmentation is sufficient to obtain constrained efficiency.}

Define the functions $\underline X:\Theta\to\Theta$ and $\overline{X}:\Theta\to\Theta$ given by $\underline{X}(\theta)\coloneqq \mathbb{E}_F[\tilde \theta|\tilde\theta\leq \theta]$ and $\overline{X}(\theta)\coloneqq \mathbb{E}_F[\tilde \theta|\tilde\theta\geq \theta]$, respectively. As $F$ is assumed to be absolutely continuous, both $\underline X$ and $\overline{X}$ are continuous and strictly increasing functions. Furthermore, it can be readily checked that the mapping $\theta\mapsto \theta-\underline{X}(\theta)$ is positive and strictly increasing over $(0,1)$, while the mapping $\theta\mapsto \theta-\overline{X}(\theta)$ is negative and strictly increasing over $(0,1)$. 

For each $u\in(0,\beta\ell)$, let $\underline \theta_u\coloneqq \min\{\theta\in\Theta:\overline{X}(\theta)\geq u/(\beta\ell)\}$, which is well-defined because $\overline{X}$ is continuous and strictly increasing, and  $\{\theta\in\Theta:\overline{X}(\theta)\geq u/(\beta\ell)\}$ is non-empty.\footnote{For example, $\overline{X}(1)=1>u/(\beta\ell)$.} Define $G_u:[\underline \theta_u, 1]
\to \mathbb{R}$ as the function given by 
\[
G_u(\theta)\coloneqq g\left(\frac{\ell \cdot\overline{X}(\theta)}{ u}\right) +g'\left(\frac{\ell\cdot \overline{X}(\theta)}{ u}\right)\frac{\ell}{ u}\Big(\theta-\overline{X}(\theta)\Big),
\]
which is a continuous and strictly increasing function because $(a)$ both $g$ and $\overline{X}$ are continuous and strictly increasing functions, $(b)$  $g'$ is a (weakly) decreasing function by concavity of $g$, and $(c)$ the mapping $\theta\mapsto\theta-\overline{X}(\theta)$ is negative and strictly increasing over $(0,1)$. 

Next, define the cutoff $\theta^c_u\coloneqq \min\{\theta\in[\underline\theta_u, 1]:G_u(\theta)\geq 0\}$, which is well-defined because $G_u$ is continuous and strictly increasing, and $\{\theta\in[\underline\theta_u, 1]:G_u(\theta)\geq 0\}$ is non-empty.\footnote{For example, $1-\overline X(1)=0$, so $G_u(1)=g(\ell/u)>0$ for each $u\in(0,\beta\ell)$. The inequality follows because $g$ is a strictly increasing function with $g(1/\beta)=0$. }

\begin{lemma}\label{lemma:cutoffs}
For each $u\in(0,\beta\ell)$, 
\[
\underline X(\theta^c_u)\leq  \frac{u}{\beta\ell}<\overline{X}(\theta^c_u).
\]
\end{lemma}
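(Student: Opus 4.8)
The plan is to prove the two inequalities separately, exploiting a geometric reading of the defining equation for $\theta^c_u$. The key observation is that $G_u(\theta)$ equals the value, at the abscissa $\theta$, of the tangent line to the map $x\mapsto g(\ell x/u)$ taken at the point $x=\overline X(\theta)$. Because $t\mapsto t/m(t)$ is convex here, $g$ is concave, so this tangent line lies weakly above the nonnegative graph of $g(\ell x/u)$ wherever the latter is positive; this single fact drives both bounds. I would first establish the right inequality $u/(\beta\ell)<\overline X(\theta^c_u)$ and then the left inequality $\underline X(\theta^c_u)\le u/(\beta\ell)$.

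For the right inequality, since $\theta^c_u\ge \underline\theta_u$ and $\overline X$ is strictly increasing, it suffices to show $\overline X(\theta^c_u)>\overline X(\underline\theta_u)\ge u/(\beta\ell)$, the last step being the definition of $\underline\theta_u$. If that inequality is already strict there is nothing more to do; otherwise continuity gives $\overline X(\underline\theta_u)=u/(\beta\ell)$, so $\ell\overline X(\underline\theta_u)/u=1/\beta$ and $g(1/\beta)=0$, which kills the first term of $G_u(\underline\theta_u)$. The surviving term $g'(1/\beta)(\ell/u)\big(\underline\theta_u-\overline X(\underline\theta_u)\big)$ is strictly negative because $g'>0$ and $\theta-\overline X(\theta)<0$ on $(0,1)$, so $G_u(\underline\theta_u)<0$. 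As $G_u$ is continuous, strictly increasing, and positive at $1$, its zero $\theta^c_u$ then lies strictly above $\underline\theta_u$, and strict monotonicity of $\overline X$ yields $\overline X(\theta^c_u)>u/(\beta\ell)$.

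For the left inequality I would invoke the tangency reading directly. Writing $x_0=\overline X(\theta^c_u)>u/(\beta\ell)$ and letting $T(x)=g(\ell x_0/u)+g'(\ell x_0/u)(\ell/u)(x-x_0)$ denote the tangent line at $x_0$, the equation $G_u(\theta^c_u)=0$ says exactly $T(\theta^c_u)=0$. Concavity of $g$ makes $x\mapsto g(\ell x/u)$ concave on $[u/(\beta\ell),1]$, so $T(x)\ge g(\ell x/u)\ge 0$ there; evaluating at the left endpoint, where $\ell x/u=1/\beta$ and $g(1/\beta)=0$, gives $T(u/(\beta\ell))\ge 0$. Since $T$ is affine with positive slope and vanishes at $\theta^c_u$, the chain $T(u/(\beta\ell))\ge 0=T(\theta^c_u)$ forces $u/(\beta\ell)\ge\theta^c_u$, and then $\underline X(\theta^c_u)\le\theta^c_u\le u/(\beta\ell)$ using $\theta-\underline X(\theta)>0$. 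The degenerate possibility $\theta^c_u=\underline\theta_u=0$ is trivial since $\underline X(0)=0$. The main obstacle I anticipate is not any individual estimate but pinning down the tangent-line interpretation cleanly: identifying $G_u(\theta^c_u)$ with the tangent's value and invoking concavity only on the subinterval $[u/(\beta\ell),1]$—since $\phi(\cdot,u)$ is merely flat-then-concave on all of $\Theta$—along with the boundary bookkeeping at $\underline\theta_u$.
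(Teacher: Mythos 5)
Your proof is correct and rests on exactly the same ingredients as the paper's: the tangent line to the concave map $x\mapsto g(\ell x/u)$ dominating the function, the normalization $g(1/\beta)=0$, and the sign of $\theta-\overline X(\theta)$. The only differences are organizational (the paper anchors the tangent at $\overline X(u/(\beta\ell))$ to get $G_u(u/(\beta\ell))\geq 0$ directly and argues the strict inequality by contradiction at $\theta^c_u$ rather than by cases at $\underline\theta_u$, which lets it avoid your case distinction over whether $G_u(\theta^c_u)=0$ actually holds), so this is essentially the paper's argument.
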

\begin{proof}[Proof of \autoref{lemma:cutoffs}]
Fix an arbitrary $u\in(0,\beta\ell)$. First, notice that  $\overline{X}(u/(\beta\ell))>u/(\beta\ell)$ by construction. Since $\underline\theta_u$ is the smallest type in $\Theta$ that satisfies $\overline{X}(\theta)\geq u/(\beta\ell)$, we have $u/(\beta\ell)\geq \underline \theta_u$. 

Next, notice that 
\begin{align*}
G\left(\frac{u}{\beta\ell}\right)&=g\left(\frac{\ell\cdot \overline{X}\left(\frac{u}{\beta\ell}\right)}{u}\right) +g'\left(\frac{\ell\cdot \overline{X}\left(\frac{ u}{\beta\ell}\right)}{ u}\right)\frac{\ell}{ u}\Bigg(\frac{ u}{\beta \ell}-\overline{X}\left(\frac{ u}{\beta\ell}\right)\Bigg)\\[6pt]
    &\geq  g\left(\frac{\ell\cdot\frac{ u}{\beta\ell}}{ u}\right) \\
    &= 0,
\end{align*}
where the inequality follows from the concavity of $g$, and the last equality follows because $g(1/\beta)=0$. Since $\theta^c_u$ is the smallest type in $[\underline\theta_u, 1]$ that satisfies  $G_u(\theta)\geq 0$, we have $u/(\beta\ell)\geq \theta_u^c$. 

By construction, $\underline{X}(\theta)\leq \theta$ for all $\theta\in\Theta$. Thus, $\underline{X}(\theta^c_u)\leq \theta^c_u\leq u/(\beta\ell)$, which establishes the first inequality in the lemma.

Also by construction, $\theta^c_u\geq \underline \theta_u$, which implies that $\overline{X}(\theta^c_u)\geq \overline{X}(\underline\theta_u)\geq u/(\beta\ell)$. Suppose, for the sake of contradiction, $\overline{X}(\theta^c_u)= u/(\beta\ell)$. Then 
\begin{align*}
    G_u\left(\theta_u^c\right)=&\underbrace{g\left(\frac{\ell \cdot\frac{u}{\beta\ell}}{u}\right)}_{=0} +\underbrace{g'\left(\frac{\ell \cdot\frac{u}{\beta\ell}}{u}\right)\frac{\ell}{u}}_{>0}\underbrace{\Big(\theta_u^c-\overline{X}(\theta_u^c)\Big)}_{<0}
    <0,
\end{align*}
which contradicts the fact that, by definition, $G_u(\theta_u^c)\geq 0$. To avoid the contradiction, we must have $\overline{X}(\theta^c_u)>u/(\beta\ell)$, establishing the second inequality in the lemma.
\end{proof}

Finally, consider the binary posterior-mean distribution $H_u\in\mpc(F)$ given by 
 \[
   H_u(x)=\left\{\begin{array}{ccc}
0&\mbox{if}        &  x<\underline X(\theta^c_u)\\
F(\theta^c)&\mbox{if}        &  x\in [\underline X(\theta_u^c), \overline X(\theta_u^c))\\
1&\mbox{if}        &  x\geq \overline X(\theta^c_u)\\
   \end{array}\right. .
   \]  
   
From Theorem 1 of \cite{dwo19} (DM henceforth), $H_u$ solves \eqref{eq:lp} if there exists a convex function $p_u:\Theta\to \mathbb{R}$ such that 
\begin{enumerate}[$(\alph*)$]
    \item $\phi(\cdot, u)\leq p_u$ pointwise,
    \item $\supp(H_u)\subseteq \{x\in\Theta: \phi(x,  u)=p_u(x)\}$, and 
    \item $\mathbb{E}_{H_u}[p_u(x)]=\mathbb{E}_{F}[p_u(x)]$. 
\end{enumerate}

Consider the function $p:\Theta\to \mathbb{R}$ given by
\[
p_u(x)=\left\{\begin{array}{ccc}
  0 &  \mbox{if} & x< \theta^c_u  \\
 \displaystyle g\left(\frac{\ell\cdot \overline X(\theta^c_u)}{u}\right) +g'\left(\frac{\ell\cdot \overline X(\theta^c_u)}{u}\right)\frac{\ell}{u}\Big(x-\overline X(\theta^c_u)\Big)& \mbox{if}  &x\geq \theta^c_u
\end{array}\right..
\]
Notice that $p_u(x)$ is (weakly) increasing and convex as it is the upper envelope of two affine functions. Clearly, $p_u(x)=\phi(x, u)$ for all $x< \theta^c_u$. For $x\in[\theta^c_u, u/(\beta\ell)]$, we have 
\[
\phi(x, u)=0\leq G_u(\theta^c_u)=p_u(\theta^c_u)\leq p_u(x),
\]
where the first inequality follows by definition of $\theta^c_u$, the second equality follows by construction of $p_u$, and the second inequality follows because $p_u$ is weakly increasing.  For $x> u/(\beta\ell)$, we have 
\[
\phi(x, u)=g\left(\frac{\ell \cdot x}{u}\right)\leq g\left(\frac{\ell\cdot \overline X(\theta^c_u)}{u}\right) +g'\left(\frac{\ell\cdot \overline X(\theta^c_u)}{ u}\right)\frac{\ell}{ u}\Big(x-\overline X(\theta^c_u)\Big)=p_u(x),
\]
where the inequality follows from the concavity of $g$. Thus, $\phi(\cdot, u)\leq p_u$ pointwise, establishing point $(a)$ of DM.

By construction, 
\[
\supp(H_u)=\{\underline X(\theta_u^c), \overline X(\theta^c_u)\}\subseteq [0, \theta^c_u]\cup\{\overline X(\theta^c_u)\}\subseteq \{x\in\Theta:p_u(x)=\phi(x, u)\},
\]
establishing point $(b)$ of DM. 

Finally, $\mathbb{E}_{H_u}[p(x)]=p_u\big(\overline X(\theta^c_u)\big)(1-F(\theta^c_u))$ while
\begin{align*}
  \mathbb{E}_F[p_u(x)]=&\int^1_{\theta^c_u}\left[g\left(\frac{\ell \cdot \overline X(\theta^c_u)}{u}\right) +g'\left(\frac{\ell\cdot \overline X(\theta^c_u)}{u}\right)\frac{\ell}{u}\Big(x-\overline X(\theta^c_u)\Big)\right]dF(x)\\[8pt]
  =&p_u\big(\overline X(\theta^c_u)\big)(1-F(\theta^c_u)) +g'\left(\frac{\ell\cdot \overline X(\theta^c_u)}{u}\right)\frac{\ell}{ u}\underbrace{\int^1_{\theta^c_u} \Big(x-\overline X(\theta^c_u)\Big)dF(x)}_{=0}\\[6pt]
  =&p_u\big(\overline X(\theta^c_u)\big)(1-F(\theta^c_u)),
\end{align*}
where the last equality follows because, by definition, $\overline X(\theta^c_u)=\mathbb{E}_F[\tilde\theta|\tilde\theta\geq \theta^c_u]$, establishing point $(c)$ of DM.

We can therefore conclude that $H_u\in\mathcal{H}(u)$. Additionally, by \autoref{lemma:cutoffs}, $\phi(\underline X(\theta_u^c), u)=0<\phi(\overline{X}(\theta_u^c), u)$. Since these properties hold for each for each $u\in(0, \beta\ell)$, they also hold for $u=\bar u$. Therefore, there exists a constrained-efficient binary market segmentation in which all buyers join only the submarket with the highest expected type, while the submarket with the lowest expected type remains inactive in equilibrium. This establishes Point $(ii)$ of the proposition. 
\end{proof}

\subsection{General characterization for constrained-efficient  segmentations}\label{necessary}
This section provides a characterization for constrained-efficient market segmentations under \autoref{ass:meeting-function} and \autoref{ass:2} only.

From \autoref{thm:equil}, each market segmentation $\sigma$ has an essentially-unique search equilibrium $(\tau_\sigma^*, u^*_\sigma)$ with $\tau^*_\sigma(\cdot)=\phi(\cdot, u^*_\sigma)$, where recall that 
\[
\phi(\nu, u)=g\left(\frac{\mathbb{E}_\nu[\lambda(\theta)\theta]}{u}\right) \mathbbm{1}_{\left[\mathbb{E}_\nu[\lambda(\theta)\theta]>\frac{u}{\beta}\right]}.
\]

Given a market segmentation $\sigma$ and buyers' anticipated payoff $u$, define the ex-ante total surplus as
\[
\mathcal S(\sigma, u)\coloneqq k\int_{\Delta(\Theta)}m(\phi(\nu, u))\mathbb{E}_\nu[\theta]\sigma(d\nu),
\]
and the ex-ante buyer's payoff as 
\[
\mathcal{U}(\sigma, u)=k\int_{\Delta(\Theta)}m(\phi(\nu, u))\mathbb{E}_\nu[\lambda(\theta)\theta]\sigma(d\nu).
\]
We can then express the planner's second-best problem \eqref{eq:sb} as 
\[
\max_{\sigma} S(\sigma, u^*_\sigma).
\]
The objective in \eqref{eq:sb} is non-linear because the planner's choice of a market segmentation affects both what the buyers' believe about sellers' types within each submarket, and also how they trade off joining one submarket over another. Instead of directly solving \eqref{eq:sb} directly, consider instead the following problem:
\begin{align*}\label{eq:bp}
\tag{BP}
    \max_{\substack{\sigma,\\  u\in [0, \beta\sup_{\theta\in\Theta}\lambda(\theta)\theta]}} \mathcal{S}(\sigma, u) \hspace{1em} \text{s.t.} \hspace{1em} \mathcal{U}(\sigma, u)\leq u. 
\end{align*}
The problem in \eqref{eq:bp} decouples the two interdependent effects in \eqref{eq:sb}: holding fixed the anticipated payoff, the market segmentation now only affects beliefs, and  holding fixed the market segmentation, the anticipated payoff affects the trade-offs across submarkets. The interdependence between these two effects is then captured by the constraint, $\mathcal{U}(\sigma, u)\leq u$, which ensures that buyers' anticipated payoffs are consistent with their expected gains from trade. As such, holding $u$ fixed, the optimal choice of market segmentation in \eqref{eq:bp} is akin to a (constrained) Bayesian Persuasion problem with a continuum of states \cite{kam11}.

\begin{proposition}\label{thm:neccessary}
A market segmentation $\sigma$ is constrained-efficient if and only if $(\sigma, u^*_\sigma)$ is a solution to \eqref{eq:bp}.
\end{proposition}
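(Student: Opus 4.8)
The plan is to partially optimize \eqref{eq:bp} over $u$ while holding the segmentation $\sigma$ fixed, and to show that this inner problem always returns the value $\mathcal{S}(\sigma, u^*_\sigma)$, which is precisely the objective of \eqref{eq:sb} evaluated at $\sigma$. The driving observation is that relaxing the equilibrium fixed-point condition $u = u^*_\sigma$ to the inequality $\mathcal{U}(\sigma, u)\le u$ does not enlarge the set of attainable surpluses: the objective $\mathcal{S}(\sigma, \cdot)$ is \emph{decreasing} in $u$, whereas the relaxed constraint only admits \emph{larger} values of $u$, so the inner optimum is pushed down to the boundary $u = u^*_\sigma$. Once this is established, the equivalence follows from a clean comparison of optimal values.

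First I would pin down the feasible set of $u$ for a fixed $\sigma$. Writing $\Phi(u) \coloneqq \mathcal{U}(\sigma, u)$ (the map whose unique fixed point is $u^*_\sigma$ in the proof of \autoref{thm:equil}), recall that $\Phi$ is continuous and weakly decreasing with $u^*_\sigma\in(0,\beta\sup_{\theta}\lambda(\theta)\theta)$. Hence $u\mapsto \mathcal{U}(\sigma, u) - u$ is \emph{strictly} decreasing, vanishes only at $u^*_\sigma$, and is positive below it and negative above it; so the constraint $\mathcal{U}(\sigma, u)\le u$ holds if and only if $u\ge u^*_\sigma$. At $u=0$ the constraint fails, since $\lim_{u\to 0}\mathcal{U}(\sigma, u) = \alpha k\,\mathbb{E}_\mu[\lambda(\theta)\theta]>0$ by \autoref{ass:2}, so the feasible set of $u$ is exactly $[u^*_\sigma,\, \beta\sup_{\theta}\lambda(\theta)\theta]$.

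Next I would record that $\mathcal{S}(\sigma, \cdot)$ is weakly decreasing in $u$. This is immediate from the fact, already established in the proof of \autoref{thm:equil}, that $u\mapsto\phi(\nu, u)$ is weakly decreasing for every $\nu$, together with $m$ increasing and $\mathbb{E}_\nu[\theta]\ge 0$, so that the integrand $m(\phi(\nu, u))\mathbb{E}_\nu[\theta]$ is weakly decreasing in $u$. Combining the two monotonicity facts, for each fixed $\sigma$ the maximum of $\mathcal{S}(\sigma, u)$ over the feasible set is attained at its left endpoint $u = u^*_\sigma$. Because the search equilibrium satisfies $\tau^*_\sigma = \phi(\cdot, u^*_\sigma)$, this value $\mathcal{S}(\sigma, u^*_\sigma)$ equals the objective of \eqref{eq:sb} at $\sigma$, and $(\sigma, u^*_\sigma)$ is itself feasible for \eqref{eq:bp} since \eqref{eq:equil} gives $\mathcal{U}(\sigma, u^*_\sigma) = u^*_\sigma$.

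The two directions then follow by value comparison. For the ``if'' direction, if $(\sigma, u^*_\sigma)$ solves \eqref{eq:bp}, then for every competing $\sigma'$ the feasible pair $(\sigma', u^*_{\sigma'})$ yields $\mathcal{S}(\sigma', u^*_{\sigma'}) \le \mathcal{S}(\sigma, u^*_\sigma)$; since each side equals the corresponding \eqref{eq:sb} objective, $\sigma$ solves \eqref{eq:sb} and is constrained-efficient. For the ``only if'' direction, if $\sigma$ is constrained-efficient then $\mathcal{S}(\sigma, u^*_\sigma) \ge \mathcal{S}(\sigma', u^*_{\sigma'}) \ge \mathcal{S}(\sigma', u')$ for any feasible $(\sigma', u')$ of \eqref{eq:bp}, the last inequality because $u'\ge u^*_{\sigma'}$ and $\mathcal{S}(\sigma', \cdot)$ is decreasing; as $(\sigma, u^*_\sigma)$ is itself feasible, it solves \eqref{eq:bp}. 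The only delicate points are the boundary behavior of the $u$-domain and the strictness of the monotonicity that pins the inner optimum to $u^*_\sigma$; beyond those, the argument reduces to this value comparison.
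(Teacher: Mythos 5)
Your proof is correct and follows essentially the same route as the paper: both treat \eqref{eq:bp} as a relaxation of \eqref{eq:sb} and use the weak monotonicity of $\mathcal{S}(\sigma,\cdot)$ in $u$ together with the uniqueness of the fixed point $u^*_\sigma$ of $\mathcal{U}(\sigma,\cdot)$ to show the relaxation is tight. If anything, your explicit identification of the feasible $u$-set as $[u^*_\sigma,\, \beta\sup_{\theta\in\Theta}\lambda(\theta)\theta]$ and the value comparison at its left endpoint is slightly more careful than the paper's claim that the constraint must bind at any optimum of \eqref{eq:bp} (which, since lowering $u$ only weakly improves the objective, is really a statement that some optimum has a binding constraint); the final value comparison is identical.
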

\begin{proof}[Proof of \autoref{thm:neccessary}]
Let $\sigma'$ be a constrained-efficient market segmentation, i.e., $\sigma'$ is a solution to \eqref{eq:sb}. Let $(\sigma'', u'')$ be a solution to \eqref{eq:bp}.

Notice that for any market segmentation $\sigma$, the pair $(\sigma, u^*_\sigma)$ satisfy the constraint in \eqref{eq:bp}, and thus, \eqref{eq:bp} is a relaxation of \eqref{eq:sb}. Therefore, $\mathcal{S}(\sigma', u^*_{\sigma'})\leq \mathcal{S}(\sigma'', u'')$. 

Conversely, notice that $u''>0$; otherwise the constraint in \eqref{eq:bp} would be violated. Additionally, we must have $\mathcal{U}(\sigma'', u'')= u''$; otherwise, the planner could lower the anticipated payoff by a small amount, which would improve the objective without violating the constraint. However, the constraint holding with equality implies $u''=u^*_{\sigma''}$. Hence, $\mathcal{S}(\sigma', u^*_{\sigma'})\geq \mathcal{S}(\sigma'', u^*_{\sigma''})= \mathcal{S}(\sigma'', u'')$. 
\end{proof}

\newpage
\bibliographystyle{plainnat}
\nocite{}\bibliography{bibref}
\end{document}